\documentclass[acmlarge]{acmart}

\usepackage{booktabs} 

\usepackage[ruled]{algorithm2e} 

\SetAlFnt{}
\SetAlCapFnt{}
\SetAlCapNameFnt{}
\SetAlCapHSkip{0pt}
\IncMargin{-\parindent}

\setcopyright{none}
\acmJournal{POMACS}
\acmYear{2017} \acmVolume{0} \acmNumber{0} \acmArticle{0} \acmMonth{0}
\acmArticleSeq{0}
\acmDOI{}
\acmPrice{}
\makeatletter
\renewcommand\footnotetextcopyrightpermission[1]{}
\let\@folioblob\@empty
\fancypagestyle{standardpagestyle}{%
  \fancyhf{}%
  \renewcommand{\headrulewidth}{\z@}%
  \renewcommand{\footrulewidth}{\z@}%
  \fancyhead[LE]{\@headfootfont\thepage\quad\textbullet\quad\@shortauthors}%
  \fancyhead[LO]{}%
  \fancyhead[RO]{\@headfootfont\shorttitle\quad\textbullet\quad\thepage}%
  \fancyfoot[RO,LE]{\footnotesize Manuscript submitted to \@journalName}%
}
\fancypagestyle{firstpagestyle}{%
  \fancyhf{}%
  \renewcommand{\headrulewidth}{\z@}%
  \renewcommand{\footrulewidth}{\z@}%
  \fancyfoot[RO,LE]{\footnotesize Manuscript submitted to \@journalName}%
}
\makeatother

\input{preamble.tex}

\newcommand{\N}[0]{{\bs N}}
\renewcommand{\t}[0]{{\bs \theta}}
\providecommand{\G}{}\renewcommand{\G}[0]{G_\t(\bs N)}
\usepackage{pstricks}

\newcommand{\fix}[1]{#1}

\begin{document}

\title{Accelerating Performance Inference over Closed Systems by Asymptotic Methods}
\author{Giuliano Casale}
\orcid{1234-5678-9012-3456}
\affiliation{%
  \institution{Department of Computing, Imperial College London}
  \streetaddress{180 Queen's Gate}
  \city{London}
  \postcode{SW7 2AZ}
  \country{UK}}

\date{}

\renewcommand{\shortauthors}{G. Casale}

\begin{abstract}
Recent years have seen a rapid growth of interest in exploiting monitoring data collected from enterprise applications for automated management and performance analysis. In spite of this trend, even simple performance inference problems involving queueing theoretic formulas often incur computational bottlenecks, for example upon computing likelihoods in models of batch systems. Motivated by this issue, we revisit the solution of multiclass closed queueing networks, which are popular models used to describe batch and distributed applications with parallelism constraints. We first prove that the normalizing constant of the equilibrium state probabilities of a closed model can be reformulated exactly as a multidimensional integral over the unit simplex. This gives as a by-product novel explicit expressions for the multiclass normalizing constant. We then derive a method based on cubature rules to efficiently evaluate the proposed integral form in small and medium-sized models. For large models, we propose novel asymptotic expansions and Monte Carlo sampling methods to efficiently and accurately approximate normalizing constants and likelihoods.  We illustrate the resulting accuracy gains in problems involving optimization-based inference.
\end{abstract}

\maketitle

%

\terms{Performance, Algorithms}

\keywords{Queueing theory, closed system, statistical inference}

\section{Introduction}
During the last decade there has been a growing trend among enterprises toward exploiting large volumes of monitoring data for performance management~\cite{Gartner}. While activities such as capacity planning have been traditionally carried out by human experts, software systems to automatically forecast capacity needs are increasingly widespread in the industry. A common issue faced by these systems is automated performance model selection and parameterization, which can be dealt with using inference methods~\cite{NazP12,SpiCBK15,SutJ11,PerHK17}. 
We here focus on inference of closed queueing network models, which are often used to describe batch systems and distributed applications with parallelism limits. In such models, likelihoods can be expressed analytically if the scheduling disciplines at the resources comply with standard product-form assumptions~\cite{BasCMP75}. 

The main challenge in computing likelihoods in closed systems is to determine the normalizing constant of state probabilities, which appears explicitly in the likelihood function. Prior work has proposed methods to exactly compute normalizing constants using recursive algorithms~\cite{Buz73,Kob79,ReiK75b,ConG86}, generating functions~\cite{BerM93,HarL04}, and moment-based methods~\cite{Cas06,Cas11a}. Furthermore, methods based on Laplace transform inversion~\cite{ChoLW95}, asymptotic expansions~\cite{McKM84,Kog01}, and Monte Carlo integration~\cite{RosTW94} have led to inexpensive approximations of the normalizing constant for large models. Still, we find that maximum likelihood estimation problems remain either too expensive to solve or return largely inaccurate solutions, depending on the method used to compute the normalizing constant.

Motivated by these observations, we revisit the computational theory of product-form multiclass closed queueing networks. Our main result is to reformulate the normalizing constant in terms of an integral over the unit simplex. This development leads to novel asymptotic expansions for the normalising constant based on Laplace's method~\cite{Kass90}, which are obtained through a novel scaling that adds at every node a set of jobs that continuously loop at that node. We also obtain a novel Monte Carlo integration method, which enables the efficient sampling of the normalizing constant. Moreover, we derive novel explicit solutions for the multiclass normalizing constant in terms of algebraic sums, with time complexities similar to recursive methods such as convolution, RECAL, and mean-value analysis, but constant space requirements~\cite{Buz73,ReiK75b,ConG86,ReiL80}. We validate the effectiveness of the proposed approximations using a numerical validation involving thousands of maximum likelihood estimation problems.

The rest of the paper is organized as follows. Section~\ref{sec:background} introduces the reference model for closed systems and surveys related work. Section~\ref{sec:theory} gives novel exact theoretical results concerning the solution of closed product-form networks. Section~\ref{sec:ae} develops asymptotic expansions and Monte Carlo integration methods. Section~\ref{sec:res} investigates the accuracy of the proposed techniques. Lastly, Section~\ref{sec:conclusion} summarizes results and concludes the paper. Proofs for some technical prerequisites are given in the Appendix.

\section{Background}
\label{sec:background}

\subsection{Notation}
The reference model is a product-form closed queueing network with $M$ nodes and $R$ job classes~\cite{BasCMP75}. Let $N_r$ be the number of jobs in class $r$ and define the population vector $\bs N=(N_1,\ldots,N_R)^T$, $N=\sum_r N_r$. We assume that the first $K\leq M$ nodes have a single-server and that the remaining ones are infinite server nodes. Matrix $\t=[\theta_{kr}]$ collects the demands placed by class-$r$ jobs at node~$k$, {\em i.e.}, the product of the mean number of visits by the mean service time of the job. We denote by $\sigma_r=\sum_{k=K+1}^M \theta_{kr}$ the sum of the class-$r$ demands at all infinite servers.

Consider for example a product-form network of processor sharing and infinite server nodes having exponential service times. In this case, the model maps to a Markov process with state space
$\mathcal{S}_M=\{\bs n \in \mathbb{N}^{MR}\;|\; \textstyle n_{kr}\geq 0,\sum_{k=1}^M n_{kr}=N_r \}$, where $\bs n=(\bs n_1,\ldots,\bs n_M)$, $\bs n_k=(n_{k1},\ldots,n_{kR})$, and $n_{kr}$ is the number of class-$r$ jobs at node~$k$.  The equilibrium distribution for this process is given by~\cite{BasCMP75}
\begin{equation}
\label{eq:bcmp}
\pi(\bs n)=\frac{1}{G_\t(\N)}\prod_{i=1}^K n_i!\prod_{k=1}^M\prod_{r=1}^R \frac{\theta_{kr}^{n_{kr}}}{n_{kr}!}\qquad \bs n\in\mathcal{S}_M
 \end{equation}
where $G_\t(\N)$ is a normalizing constant over $\mathcal{S}_M$ and for vector $\bs v=(v_1,\ldots,v_n)$ we define $v=v_1+\ldots+v_n$. By the given definitions, the normalizing constant may be written as
\begin{equation}
\label{eq:Gdef}
G_\t(\N)=\sum_{\bs n\in \mathcal{S}_M} \prod_{i=1}^K n_i!\prod_{k=1}^M\prod_{r=1}^R \frac{\theta_{kr}^{n_{kr}}}{n_{kr}!}
\end{equation}
The last expression is valid for arbitrary multiclass product-form queueing networks defined in the sense of the BCMP theorem~\cite{BasCMP75}.

\subsection{Computational methods: state-of-the-art}
Since the number of states of the queueing network model grows as $\mathcal{O}(N^{MR})$ with the job population, it is usually infeasible to obtain $G_\t(\N)$ by direct summation over the state space $\mathcal{S}_M$. To tackle this issue, several computational methods have been defined over a time span of four decades. We limit here to give a high-level review, pointing to the references for details.

\subsubsection{Exact methods} The classic exact computational methods for $G_\t(\N)$ are the multiclass convolution algorithm (CA)~\cite{ReiK75b} and RECAL~\cite{ConG86}, which feature respectively $\mathcal{O}(N^R)$ and $\mathcal{O}(N^K)$ time and space requirements. Such polynomial complexities limit the application of these methods to models with a few classes or queues. Other exact algorithms with similar complexities may be found in \cite{ReiL80,BerM93,Ger95,HarL04}. The method of moments (MoM)~\cite{Cas11a} lowers the requirements approximately to $\mathcal{O}(N^2\log N)$ time and $\mathcal{O}(N\log N)$ space using a recursive system of linear equations, applicable under certain regularity conditions on $\bs \theta$. This method becomes computationally demanding as $M$ and $R$ grow simultaneously and solution times in large models are of the order of minutes, thus too expensive for optimization-based performance inference. Summarizing, several exact methods for $G_\t(\N)$ exist, but they are hardly applicable to performance inference problems due to their cost. Experiments illustrating these limitations are shown in \S\ref{sec:running}.

\subsubsection{Approximate methods} Approximate mean-value analysis (AMVA) algorithms~\cite{Sch79} provide accurate estimates of mean performance measures and are $\mathcal{O}(1)$ with respect to job populations. Yet the focus on mean performance metrics is restrictive since inference problems typically require a probability model such as \eqref{eq:bcmp}, for example to express prior distributions on parameters or to infer an optimal parameterization using likelihood maximization~\cite{Paw01,SpiCBK15}. AMVA does not apply to these problems as it neither computes likelihoods nor probabilities.

Flow-equivalent methods alternatively aggregate a subnetwork into a node with state-dependent service rates, which may be solved for state probabilities~\cite{ChaHW75b}. Unfortunately, this method is normally too expensive to apply in multiclass models, where the parameterization of the flow-equivalent server requires to determine service rates under all possible combinations of jobs residing at the node.

Such limitations are addressed by specialized approximations for $G_\t(\N)$, which include Monte Carlo integration, numerical methods, and asymptotic expansions. Monte Carlo integration methods are first introduced in \cite{RosTW94}, based on the following integral form \cite{McKM84}
\begin{equation}
\label{eq:intform}
\G=\frac{1}{N_1!\cdots N_R!} \int_{ \mathbb{R}^{K}_+} e^{-y}\prod_{r=1}^R\left(\sigma_r+\sum_{k=1}^K\theta_{kr}y_k\right)^{N_r} d\bs y
\end{equation}
where $\mathbb{R}^{K}_+=\{\bs y\in \mathbb{R}^K|\bs y\geq\bs 0\}$, with $\bs 0=(0,\ldots,0)^T$ and $y=\sum_{k} y_k$. Expression \eqref{eq:intform} is obtained by expressing the $n_i!$ terms in \eqref{eq:bcmp} using the integral form of the gamma function and subsequently by repeated application of the multinomial theorem
\begin{equation}
\label{eq:multinomial}
\left(\sum_{k=1}^K x_k\right)^V = V! \sum_{\bs v\geq \bs 0\atop v=V}\prod_{k=1}^K \frac{x_k^{v_k}}{v_k!}
\end{equation}
where $\bs v\in \mathbb{N}^K$. The integral form \eqref{eq:intform} can be efficiently evaluated using importance sampling~\cite{RosTW94}, normally requiring $J=10^5-10^7$ samples to approximate $G_\t(\N)$ with low variance. Computing millions of samples is acceptable for evaluating individual models, but places an excessive overhead for use in optimization-based inference. Moreover, the variance of the Monte Carlo estimators for $\G$ can adversely affect the identification of the search direction~\cite{WanCS16}.

Numerical methods for $G_\t(\N)$ include Laplace transform inversion (LTI)~\cite{ChoLW95} and ODE-based methods based on Taylor expansion (TE)~\cite{WanCS16}. LTI allows for arbitrary approximation accuracy, but can still incur a significant computational cost. For instance, \cite[p. 967]{ChoLW95} provides an example where LTI requires $\approx 10^{12}$ operations on a model with $K=64$ and $R=9$, which is beyond the acceptable cost for a single iteration of an optimization program. Instead, the approximations proposed in this paper scale efficiently to models of this size. TE is theoretically $\mathcal{O}(1)$, but 
it becomes difficult to apply in large models due to the rapid growth of $\G$ that affects numerical precision.

\subsubsection{Asymptotic expansions}
Asymptotic expansions for $G_\t(\N)$ are $\mathcal{O}(1)$ and thus capable of accelerating optimization. Expansions appear in \cite{McKM84,KneT92,Kog94,KogY96,Kog01} and are discussed below.  Other asymptotic methods exist but they are not relevant to the present work as they either focus on single-class models only~\cite{GeoXS12} or study asymptotic values of mean-value performance metrics \cite{AnsC08, BalS97, BerBK99, KelMW09, KneT98}, whereas we focus here on computing likelihoods in multiclass systems.

PANACEA (PAN)~\cite{McKM84} is applicable only to models with infinite servers and in normal usage, {\em i.e.}, where resources are lightly utilized so that $\max_i \alpha_i < 1$, where $\alpha_i=\sum_r N_r \theta_{ir} \sigma^{-1}_r$, $i=1,\ldots,K$. Normal usage conditions tend to be restrictive in applications, where the analysis of heavy-load regimes is of practical importance.

The ray method~\cite{KneT92} (RAY) is an approximation method for PDEs. Combined with singular perturbation theory, RAY provides an approximation for $\G$ under an increasing number of nodes and a simultaneous scaling of their service demands. We extensively compare our results against the baseline provided by this method. 

Saddle-point approximation (SPA) provides asymptotic expansions of contour integrals arising from the generating function of $\G$. A limitation of SPA is that explicit formulas are available only for small models, although it principle the method may be generalized~\cite{Kog94,KogY96,Kog01}.

As we show in \S\ref{sec:ae}, our asymptotic expansions are based on Laplace's method~\cite{Kass90}, which may be seen as a specialization of the saddle-point method for real integrals. This substantially differs from the SPA method which applies to contour integrals in the complex domain and leads to rather different expressions for $G_\t(\N)$.

\begin{table}[t]
\caption{Demand estimation results. {\normalfont $T$ = timeout (10 min). Runtimes are rounded up to the nearest integer. On this example, memory requirements are negligible for all methods.}}
{
\begin{center}
\begin{tabular}{lcccccc}
\toprule
$M=K=3$, $R=3$ & \multicolumn{3}{c}{{ Abs. Perc. Error (\%)}} & \multicolumn{3}{c}{{ Time (seconds)}}\\
\midrule
 $N_r=N/R=$ & {\em 2} & {\em 20} & {\em 40} & {\em 2} & {\em 20} & {\em 40} \\
\midrule
 CA (No timeout) & 0.3 & 0.0 & 0.0 & 2 & 419 & 3413 \\
\midrule
 CA & 0.3 & 0.0 &  40.3 & 2 & 419 & $T$ \\
 MCI3 &  90.9 & 138.9  & 115.1 & 5 & 5 & 7\\
 MCI6 & 135.9 & 91.1 & 118.3 & 185 & 171 & 137 \\
 MoM & 15.4 & 51.2 & 71.0 & $T$ & $T$ & $T$  \\
 NOG & 38.6 & 92.2 & 96.1 & $1$ & $1$ & $8$ \\
 RAY & 82.1 & 72.8 & 59.4 & 6 & 3 & 8  \\
 RECAL & 0.3 & 31.7 & 76.5 & 3 & $T$ & $T$ \\
 TE-2 & 38.7 & 92.2 & 96.1 & 143 & 174 &  138 \\
 TE-3 & 35.2 & 91.3 & 95.6 & $T$ & $T$ & $T$ \\
\bottomrule
\end{tabular}
\end{center}
}
\label{tab:motivating}
\end{table}

\subsection{Motivating example: demand estimation from state samples}
\label{sec:running}
To illustrate the limitations of existing computational techniques, we compare prior art methods in a likelihood maximization application. Assume to measure a set of $L$ state samples, $\bs n^{(l)}\in\mathcal{S}_M$, $l=1,\ldots,L$. We seek for a maximum likelihood estimator (MLE) for the demand matrix ${\bs \theta}$. In practice, problems of this kind arise during model selection and calibration, where one seeks for an optimal parameterization and the $\bs n^{(l)}$ samples represent system state measurements. Likelihood-based estimation offers a number of advantages over other estimation techniques, for example it can cope with missing and aggregate data~\cite{Paw01}.

We assume $\sigma_r=0$ and knowledge of $\theta_r=\sum_k \theta_{kr}$, $\forall r$, {\em i.e.}, the end-to-end response time of a single class-$r$ request when $N=1$. From \eqref{eq:bcmp} the log-likelihood of $\t$ is given by
\begin{equation}
\label{eq:llog}
 \mathcal{L}(\bs \theta) = \sum_{l=1}^L \log C(\bs n^{(l)})+L\sum_{k}\sum_{r} \widetilde{Q}_{kr}\log \theta_{kr}-L\log \G
\end{equation}
where \fix{$C(\bs n)=\prod_{i=1}^K n_i!\prod_{k=1}^M\prod_{r=1}^R (n_{kr}!)^{-1}$, and} $\widetilde{Q}_{kr}=\sum_{l} n^{(l)}_{kr}/L$ is the measured mean queue-length of class $r$ at node $k$. The first term can be neglected upon optimizing over $\t\geq \bs 0$. The cost of computing $\mathcal{L}(\bs \t)$ is thus dominated by the cost of determining $\log \G$.

We consider \eqref{eq:llog} for a model with $M=K=3$, $R=3$, $\bs N=(N,N,N)/3$, and $\theta_{kr}=k*r$ and seek for a (local) MLE that maximizes $\mathcal{L}(\bs \theta)/L$ subject to $\t\geq \bs 0$. We also let $L\to \infty$ by using in place of the $\widetilde{Q}_{kr}$ the exact mean queue-lengths computed by mean-value analysis~\cite{ReiL80}. We apply MATLAB's {\sf\small fmincon} interior point algorithm, and calculate $\G$ at each iteration using one among CA, RECAL, MoM, TE, or RAY. We also use Monte Carlo integration (MCI) with AMVA-based initialization \cite{WanCS16}. The assumptions underpinning PAN and SPA are not met on this example: PAN requires infinite servers; SPA is not available for models with $3$ nodes and $3$ classes or larger. TE also requires infinite servers, but we can set $\sigma_r=10^{-8}$, $\forall r$; a similar perturbation cannot be used with PAN since the method also requires normal usage. Lastly, we include in the experiment a variant of \eqref{eq:llog} where we neglect the normalizing constant by setting $\log \G=0$. This variant is denoted by NOG and corresponds to the log-likelihood formula for a product-form open queueing network with demands $\t$. Each chosen method is initialized at the same point, sampled from a uniform distribution. We set a timeout of $10$ minutes, after which {\sf\small fmincon} returns after completing the running iteration.
Experiments are run on a quad-core desktop computer.

Table~\ref{tab:motivating} shows execution times with $N_r=N/R=2,20,40$ jobs and the absolute percentage errors of the returned demands with respect to the true $\t$. The suffixes for MCI and TE are the number of samples $J$ and the scale of the ODE step size~$\tau$, respectively, {\em e.g.}, MCI3 has $J=10^{3}$ and TE-2 has $\tau=10^{-2}$.
The {\em CA (No timeout)} method is exact and thus provides an upper bound on achievable accuracy on this instance. For this method, we run the optimization until termination, computing the normalizing constant at each iteration using CA. This baseline is required since the problem \eqref{eq:llog} is non-convex, thus the choice of the initial point affects the achievable accuracy and it is thus undesirable to reason on absolute error alone.

We note that all methods incur a considerable degradation of accuracy and running times as the population $N$ grows. 
Some methods, such as TE, have a similar (or worse) performance than NOG, which ignores the normalizing constant.
CA is the best among the exact methods, but its execution times grow quickly and on larger models become infeasible. Among existing approximations, the RAY asymptotic expansion achieves the best results, although the errors remain high, around 59\%-82\%. However, computational times are scalable. A similar conclusion applies to MCI with a small number of samples. This motivates us to further investigate into asymptotic expansions and Monte Carlo integration methods.
We also remark that on this example the expansion proposed later in \S\ref{sec:ae} achieves less than $0.7\%$ absolute percentage error in all the three cases, with runtimes between $2s$ and $4s$. {A validation on a broader set of instances is presented in \S\ref{sec:res}.}

\section{Exact results}
\label{sec:theory}
In order to inexpensively approximate the normalizing constant, we first derive novel integral expressions for $\G$. This derivation leads to novel numerical approximations and provides a theoretical baseline for developing asymptotic results.

\subsection{Integral form over the unit simplex}
We first derive an exact integral form for the normalizing constant in networks without infinite servers.
\begin{theorem}
\label{thm:nois}
In a multiclass closed queueing network with $K$ single-server nodes
\begin{equation}
\label{eq:Gsimplex}
\G=  \frac{(N+K-1)!}{N_1! \cdots N_R!} \int_{\Delta_{K}} \prod_{r=1}^R \Biggl(\sum_{k=1}^{K} \theta_{kr}u_k\Biggr)^{N_r} d\bs u
\end{equation}
where $\Delta_{K}=\{\bs u\in \mathbb{R}^K\,|\, u_i\geq 0, \sum_{i=1}^K u_i=1\}$ is the unit simplex.
\end{theorem}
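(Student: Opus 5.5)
We start from the integral form \eqref{eq:intform} with $\sigma_r=0$, since there are no infinite servers ($M=K$):
\[
\G=\frac{1}{N_1!\cdots N_R!}\int_{\mathbb{R}^K_+} e^{-y}\prod_{r=1}^R\Bigl(\sum_{k=1}^K\theta_{kr}y_k\Bigr)^{N_r}d\bs y .
\]
The integrand, apart from $e^{-y}$, is a homogeneous polynomial in $\bs y$ of total degree $N=\sum_r N_r$. This suggests a change to "polar" coordinates adapted to the simplex: $y_k = t\,u_k$, where $t=y=\sum_k y_k\in[0,\infty)$ and $\bs u\in\Delta_K$.

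The key steps are as follows.

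First, we parametrize $\bs u$ by its first $K-1$ coordinates, with $u_K=1-\sum_{k<K}u_k$, and map $(t,u_1,\ldots,u_{K-1})\mapsto \bs y$. The Jacobian of this map is $t^{K-1}$. This is the standard Dirichlet/simplex substitution: after subtracting columns, the Jacobian matrix becomes triangular with entries $t$ (repeated $K-1$ times) and $1$. The measure $d\bs u$ on $\Delta_K$ is understood as Lebesgue measure in these $K-1$ coordinates, which is the convention under which the stated constant must come out right.

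Second, homogeneity gives
\[
\prod_r\Bigl(\sum_k\theta_{kr}tu_k\Bigr)^{N_r}=t^N\prod_r\Bigl(\sum_k\theta_{kr}u_k\Bigr)^{N_r}.
\]
The integral therefore factorizes by Fubini/Tonelli, which is legitimate since the integrand is nonnegative, into
\[
\int_0^\infty e^{-t}t^{N+K-1}\,dt\cdot\int_{\Delta_K}\prod_r\Bigl(\sum_k\theta_{kr}u_k\Bigr)^{N_r}d\bs u.
\]
The first factor equals $\Gamma(N+K)=(N+K-1)!$, which yields \eqref{eq:Gsimplex}.

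The only delicate point is the Jacobian and measure normalization on $\Delta_K$. If it is doubtful, a sanity check is $K$ arbitrary, $N=0$: then $G=1$, and the right-hand side becomes $(K-1)!\,\mathrm{vol}(\Delta_K)=(K-1)!/(K-1)!=1$, which is consistent.

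As an alternative self-contained route, avoiding \eqref{eq:intform}, we can expand the product with the multinomial theorem \eqref{eq:multinomial}. We then integrate monomials using the Dirichlet integral
\[
\int_{\Delta_K}\prod_k u_k^{n_k}\,d\bs u=\frac{\prod_k n_k!}{(\sum_k n_k+K-1)!},
\]
and recover \eqref{eq:Gdef} term by term with $n_k=\sum_r n_{kr}$. This serves as a cross-check.
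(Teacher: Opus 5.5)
Your proof is correct, and your main route differs from the paper's. Both start from \eqref{eq:intform}. The paper does not change variables there. Instead it uses \eqref{eq:prodsum2} to write the product of linear forms as a mixed $\bs t$-derivative of $(\sum_k \theta_k y_k)^N$, with $\theta_k=\sum_r\theta_{kr}t_r$, and recognizes the integral as a single-class constant $g_{\t\bs t}(N)$. It identifies this constant with the divided difference $[\theta_1,\ldots,\theta_K]x^{N+K-1}$ (Appendix~\ref{apx:L1}), turns that into a simplex integral via Hermite--Genocchi \eqref{eq:genocchi} (Appendix~\ref{apx:L2} covers coincident $\theta_k$), and finally undoes \eqref{eq:prodsum2}.

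Your substitution $y_k=tu_k$, with Jacobian $t^{K-1}$ and homogeneity of degree $N$, reaches the same identity in one step, with no special treatment of repeated demands. It also buys more: if you skip the homogeneity step and keep $\sigma_r$, the same substitution with $v=t$ gives Corollary~\ref{thm:is} verbatim. It is the cone-over-the-simplex decomposition behind the paper's remark that \eqref{eq:intform} can be recovered from \eqref{eq:Gsimplex}, read in the opposite direction. Your measure convention (Lebesgue measure in $u_1,\ldots,u_{K-1}$, so that $\Delta_K$ has volume $1/(K-1)!$) is the one implicit in \eqref{eq:dirichlet} and \eqref{eq:genocchi}, so the constant $(N+K-1)!$ is right.

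What the paper's longer route buys is structure. It shows that multiclass normalizing constants are mixed derivatives of single-class ones, and that those are divided differences of a power function. The identity \eqref{eq:prodsum2}, through its finite-difference analogue \eqref{eq:prodsum}, is what yields the explicit formulas of Corollaries~\ref{cor:Gps} and~\ref{cor:Gms}; your argument does not produce these by-products. Your cross-check via the multinomial theorem and \eqref{eq:dirichlet} is exactly the shorter proof the paper itself mentions right after the theorem.
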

\begin{proof}
The multinomial theorem~\eqref{eq:multinomial} implies that for any set of real numbers $(a_1,\ldots,a_R)$ and variables $\bs t=(t_1, \mathellipsis,t_R)^T$ we can write
\begin{align}
 \label{eq:prodsum2}
 a^{N_1}_1 a^{N_2}_2 \cdots a^{N_R}_R=\frac{1}{N!} \frac{\partial^{N_1}\cdots \partial^{N_R}}{\partial t_1^{N_1}\cdots \partial t_R^{N_R}}\left(\sum_{i=1}^R a_i t_i \right)^N
 \end{align}
Since $\sigma_r=0$, we apply \eqref{eq:prodsum2} to the product in the integrand of \eqref{eq:intform} with $a_r=\sum_{k}\theta_{kr}y_k$, finding after exchanging the order of differentiation and integration
\begin{align}
\G&=\frac{1}{N_1!\cdots N_R!} \frac{\partial^{N_1}\cdots \partial^{N_R}}{\partial t_1^{N_1}\cdots \partial t_R^{N_R}} \int_{ \mathbb{R}^K_+} \frac{e^{-y}}{N!}  \left(\sum_{k=1}^{K} {\theta}_{k}y_k\right)^{N}d\bs y\nonumber\\
&=\frac{1}{N_1!\cdots N_R!} \frac{\partial^{N_1}\cdots \partial^{N_R}}{\partial t_1^{N_1}\cdots \partial t_R^{N_R}} g_{\t\bs t}(N)\label{eq:Gtn}
\end{align}
where $g_{\t\bs t}(N)$ is the normalizing constant of a single-class model with demands $\theta_k=\sum_r\theta_{kr}t_r$, $k=1,\ldots,K$, and the last passage follows by \eqref{eq:intform}.
We then prove in Appendix~\ref{apx:L1} the following equivalence
\begin{equation}
\label{eq:dd}
g_{\t\bs t}(N)=[\theta_1,\mathellipsis,\theta_{K}]x^{N+K-1}
\end{equation}
where the right-hand side is the divided difference\footnote{For a given function $f(x)$, divided differences extend the notion of forward difference of $f(x)$ to a set of interpolation points arbitrarily located in the domain of $f(x)$. We point to \cite{Atk89,Mil33} for further details.} of $x^{N+K-1}$ relatively to the interpolation points $\theta_1$,\ldots, $\theta_{K}$. 
This expression is valid for single-class normalizing constants with arbitrary demands.
We can then apply to the last expression the Hermite-Genocchi formula~\cite{Atk89}, which is a classic integral form for divided differences
\begin{equation}
\label{eq:genocchi}
  [\theta_1,\mathellipsis,\theta_{K}]f(x)=\int_{\Delta_{K}} f^{(K-1)}(\theta_1u_1+\mathellipsis+\theta_{K}u_{K}) d\bs u
\end{equation}
where $f^{(K-1)}(x)$ is the $(K-1)$th derivative of $f(x)$ and $\Delta_{K}$ is the unit simplex. Here we set $f(x)=x^{N+K-1}$ and show in Appendix~\ref{apx:L2} that, for this specific choice of $f(x)$, \eqref{eq:genocchi} also holds under nondistinct $\theta_i$, a case normally not covered by the Hermite-Genocchi formula.
Using \eqref{eq:dd} in \eqref{eq:Gtn}, followed by \eqref{eq:genocchi}, we get
\begin{equation*}
\G=\frac{(N+K-1)!}{N_1!\cdots N_R!}\int_{\Delta_{K}} \frac{1}{N!} \frac{\partial^{N_1}\cdots \partial^{N_R}}{\partial t_1^{N_1}\cdots \partial t_R^{N_R}} \left(\sum_{k=1}^{K} \theta_k u_k\right)^{N}d\bs u
\end{equation*}
Recalling that $\theta_k=\sum_r\theta_{kr}t_r$ and applying \eqref{eq:prodsum2} to the integrand, we find \eqref{eq:Gsimplex}.
\end{proof}
{
Theorem~\ref{thm:nois} provides a novel integral form for the multiclass normalizing constant, with an integrand similar to \eqref{eq:intform}, but defined over a bounded domain. It is also possible to verify that \eqref{eq:intform} follows from \eqref{eq:Gsimplex} using the Laplace transform, once the integration domain is reformulated in a suitable parametric form~\cite{LasZ01}.

It is useful to note that a shorter proof of Theorem~\ref{thm:nois} follows by first applying the multinomial theorem \eqref{eq:multinomial} to each factor in the integrand of \eqref{eq:Gsimplex} and then using term-by-term the Dirichlet integral
\begin{equation}
\label{eq:dirichlet}
\int_{\Delta_{K}} \prod_{k=1}^K u_k^{n_k} d\bs u = \dfrac{\prod_{k=1}^K n_k!}{(\sum_{k=1}^K n_k+K-1)!}
\end{equation}
This yields \eqref{eq:Gdef} after noting that in the absence of infinite server nodes it is $\sum_{k=1}^K n_k=N$. Compared to this simple derivation, the proof of Theorem~\ref{thm:nois} introduces \eqref{eq:prodsum2}, which is used in the next section to obtain explicit solutions. Similar mappings between sums and products are important also in multivariate statistical analysis~\cite{Kan07}. Moreover, the proof of Theorem~\ref{thm:nois} shows that multiclass normalizing constants may be expressed as derivatives of single-class normalizing constants, and that the latter may be seen as divided differences of the power function.
}

\subsection{Explicit solutions}
While our interest is on deriving approximations, novel exact computational formulas may also be obtained from Theorem~\ref{thm:nois}.
Such expressions are not used throughout due to their cost, but they appear of theoretical interest due to the lack of similar expressions for the multiclass normalizing constant.

\begin{corollary} \label{cor:Gps}
The normalizing constant of a closed multiclass queueing network is given by
\begin{equation}
\label{eq:Gps}
\G = \sum_{\bs 0\leq \bs t\leq \bs N} \frac{(-1)^{N-t}}{N_1!\cdots N_R!}\prod_{r=1}^R{N_r \choose t_r} g_{\t\bs t}(N)
\end{equation}
where $\bs t=(t_1,\ldots,t_R)^T$, $t=\sum_r t_r$, and $g_{\t\bs t}(N)$ is the normalizing constant of a single-class model with demands $\theta_k=\sum_{r=1}^R t_r \theta_{kr}$.
\end{corollary}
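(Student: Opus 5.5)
The plan is to start from the representation obtained inside the proof of Theorem~\ref{thm:nois}, namely \eqref{eq:Gtn}:
\[
\G=\frac{1}{N_1!\cdots N_R!}\,\frac{\partial^{N_1}\cdots\partial^{N_R}}{\partial t_1^{N_1}\cdots\partial t_R^{N_R}}\, g_{\t\bs t}(N).
\]
Here $g_{\t\bs t}(N)$ is the single-class normalizing constant with demands $\theta_k=\sum_r\theta_{kr}t_r$. By \eqref{eq:intform} with $\sigma_r=0$, it equals $\frac{1}{N!}\int e^{-y}\bigl(\sum_r t_r a_r(\bs y)\bigr)^N d\bs y$, where $a_r(\bs y)=\sum_k\theta_{kr}y_k$. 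Hence $g_{\t\bs t}(N)$ is a homogeneous polynomial of total degree $N$ in $\bs t$. The mixed derivative of order $(N_1,\ldots,N_R)$ with $\sum_r N_r=N$ extracts exactly $\prod_r N_r!$ times the coefficient of the monomial $t_1^{N_1}\cdots t_R^{N_R}$. So the task reduces to expressing a single coefficient of a homogeneous degree-$N$ polynomial $p(\bs t)$ through its values at integer points $\bs 0\le\bs t\le\bs N$.

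The key tool is the multivariate finite-difference identity. For a polynomial $p$ of total degree $N$ and the forward difference operators $\Delta_r$ in each coordinate,
\[
\Delta_1^{N_1}\cdots\Delta_R^{N_R}p(\bs 0)=\sum_{\bs 0\le\bs t\le\bs N}(-1)^{N-t}\prod_r\binom{N_r}{t_r}p(\bs t).
\]
This follows from expanding $\Delta_r=E_r-1$, where $E_r$ is the shift in coordinate $r$, by the binomial theorem in each coordinate.

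It remains to show the left-hand side equals $\prod_r N_r!$ times the coefficient of $\bs t^{\bs N}$. I would argue monomial by monomial. The operator $\Delta_r^{N_r}$ annihilates any power $t_r^{m}$ with $m<N_r$, and it sends $t_r^{N_r}$ to the constant $N_r!$. A degree-$N$ monomial $\bs t^{\bs m}$ with $\bs m\neq\bs N$ and $\sum m_r=N$ must have some $m_r<N_r$, since the sums agree, so it is killed by the product operator. The monomial $\bs t^{\bs N}$ gives $\prod N_r!$. Because $p$ is homogeneous, no lower-degree terms arise; even if they did, they would also be annihilated.

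Combining these steps, $\partial^{\bs N}g/\prod N_r!$ equals the difference expression divided by $\prod N_r!$, which gives \eqref{eq:Gps}. Alternatively, one can view \eqref{eq:prodsum2} as the continuous analogue and simply replace derivatives by differences.

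The main point requiring care is the annihilation argument, specifically the claim that all other monomials vanish. This relies on the homogeneity of degree exactly $N$ and on $\sum_r N_r=N$, so I would state the homogeneity explicitly from \eqref{eq:intform}. The remaining steps are routine bookkeeping.
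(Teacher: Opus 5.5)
Your proof is correct, but it applies the finite-difference identity at a different point than the paper does.

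The paper applies the product identity \eqref{eq:prodsum}, which it cites from the literature, directly to the integrand $\prod_r(\sum_k\theta_{kr}u_k)^{N_r}$ of the simplex form \eqref{eq:Gsimplex}. It then recognizes each term $\frac{(N+K-1)!}{N!}\int_{\Delta_K}(\sum_k\theta_k u_k)^N d\bs u$ as $g_{\t\bs t}(N)$. That step is the single-class case of Theorem~\ref{thm:nois}, which rests on the divided-difference and Hermite--Genocchi arguments of the appendices.

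You instead start from the intermediate derivative representation \eqref{eq:Gtn}, which needs only \eqref{eq:intform} and \eqref{eq:prodsum2}. You then convert the mixed derivative into the forward difference $\Delta_1^{N_1}\cdots\Delta_R^{N_R}$ at $\bs 0$, using that $g_{\t\bs t}(N)$ is homogeneous of degree $N$ in $\bs t$. Your annihilation step is the right justification: if $\sum_r m_r=N$ and $\bs m\neq\bs N$, then some $m_r<N_r$, so the factor $\Delta_r^{N_r}t_r^{m_r}$ vanishes.

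By linearity the two routes are equivalent. Your derivative/difference coincidence, applied to $(\sum_r t_r a_r)^N$ and combined with \eqref{eq:prodsum2}, is itself a proof of \eqref{eq:prodsum}. So the real difference is whether the identity is used before or after integrating.

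Your version is more self-contained. It proves the difference identity that the paper only cites, and it avoids the simplex integral and the Hermite--Genocchi step entirely. The paper's version is a one-line substitution once Theorem~\ref{thm:nois} and \eqref{eq:prodsum} are available, and it ties the corollary directly to the new simplex representation.
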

\begin{proof}
We consider finite differences~\cite{Mil33}, where \eqref{eq:prodsum2} is replaced by~\cite{Bal11}
\begin{equation}
 \label{eq:prodsum}
 a^{N_1}_1 a^{N_2}_2 \cdots a^{N_R}_R=\sum_{\bs 0\leq \bs t\leq \bs N} \frac{(-1)^{N-t}}{N!} \prod_{s=1}^R {N_s\choose t_s} \left(\sum_{r=1}^R t_r a_r\right)^N
 \end{equation}
The result follows by applying \eqref{eq:prodsum} to the product in the integrand of \eqref{eq:Gsimplex} and recognizing $g_{\t\bs t}(N)$ in the resulting expression.
\end{proof}
Computing $g_{\t\bs t}(N)$ explicitly using the closed-form formulas in~\cite[Eq. 3.12]{BerM93} implies for \eqref{eq:Gps} a theoretical complexity of $\mathcal{O}(N^R)$ time and $\mathcal{O}(1)$ space. For example, in the special case where demands are distinct, the normalizing constant can be computed in $\mathcal{O}(1)$ as~\cite{Koe58,BerM93}
\begin{equation}
\label{eq:exsingle}
g_{\t\bs t}(N)=\sum_{k=1}^K \frac{\theta_k^{N+K-1}}{\prod_{i\neq k} (\theta_k-\theta_i)}
\end{equation}
yielding by \eqref{eq:Gps} the following explicit expression for the normalizing constant
\begin{equation}
\G = \sum_{\bs 0\leq \bs t\leq \bs N} \frac{(-1)^{N-t}}{N_1!\cdots N_R!}\prod_{r=1}^R{N_r \choose t_r} \sum_{k=1}^K \dfrac{(\sum_{s=1}^Rt_s\theta_{ks})^{N+K-1}}{\prod_{i\neq k} (\sum_{s=1}^Rt_s(\theta_{ks}-\theta_{is}))}
\end{equation}
\fix{Two provisions apply to the last expression. First, the convention $0/0=0$ is used. Second, the expression requires the induced demands $\theta_k=\sum_{s=1}^R t_s\theta_{ks}$ to be pairwise distinct for every $\bs t$ in the summation. Otherwise the following result must be considered.}
A similar formula holds for the general case if one uses \cite[Eq. 3.12]{BerM93} in place of \eqref{eq:exsingle}. Consider the single-class demands $\theta_k=\sum_{r=1}^Rt_r \theta_{kr}$. Assume that $\theta_k$ has multiplicity $m_k\geq 1$ and let $K'$ be the number of distinct demands. Plugging \cite[Eq. 3.12]{BerM93} into \eqref{eq:Gps} yields the general expression
\begin{multline}
\G = \sum_{\bs 0\leq \bs t\leq \bs N} \frac{(-1)^{N-t}}{N_1!\cdots N_R!}\prod_{s=1}^R{N_s \choose t_s} \sum_{j=1}^{K'} (-1)^{m_j-1} (\textstyle\sum_{s=1}^Rt_s\theta_{js})^{N+M-m_j}\\
\times\sum_{\substack{\bs r\geq 0\\r=m_j-1}} (-1)^{r_j} {N+r_j \choose r_j} \prod_{k=1\atop k\neq j}^{K'} {m_k+r_k-1\choose r_k} \frac{(\sum_{s=1}^Rt_s\theta_{ks})^{r_k}}{(\sum_{s=1}^Rt_s(\theta_{js}-\theta_{ks}))^{m_k+r_k}}
\end{multline}
where $\bs r\in \mathbb{N}^{K'}$, $r=\sum_{j=1}^{K'} r_j$. We are also in condition to derive another explicit formula for $\G$.
\begin{corollary} \label{cor:Gms}
The normalizing constant of a closed multiclass queueing network model can be expressed as
\begin{equation}
\label{eq:Gms}
\G= \sum_{\substack{\bs h\geq \bs 0:\\h\leq N}} \frac{(-1)^{N-h}}{N_1!\cdots N_R!} {N+K-1\choose N-h} \prod_{r=1}^R \left(\sum_{i=1}^K h_i\theta_{ir}\right)^{N_r}
\end{equation}
where $\bs h\in \mathbb{N}^K$ and $h=\sum_{i} h_i$.
\end{corollary}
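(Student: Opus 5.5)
The plan is to treat \eqref{eq:Gms} as a cubature rule that happens to be exact for the integral in Theorem~\ref{thm:nois}. The integrand
\[
P(\bs u)=\prod_{r=1}^R\Bigl(\sum_{k=1}^K\theta_{kr}u_k\Bigr)^{N_r}
\]
is a homogeneous polynomial of degree $N$ in $\bs u$. So we need an exact formula for integrals of homogeneous degree-$N$ polynomials over $\Delta_K$, expressed through evaluations of $P$ at lattice points.

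The natural tool is the Grundmann--Möller construction. The paper has not stated it, so I would prove the needed identity directly. First reduce to a single power: by \eqref{eq:prodsum2}, $P(\bs u)$ is a linear combination of derivatives in $\bs t$ of $(\sum_k \theta_k u_k)^N$ with $\theta_k=\sum_r\theta_{kr}t_r$. Both sides of \eqref{eq:Gms} are linear in $P$ and commute with these $\bs t$-derivatives. It therefore suffices to prove the single-class identity
\[
\frac{(N+K-1)!}{N!}\int_{\Delta_K}\Bigl(\sum_k\theta_k u_k\Bigr)^N d\bs u=\sum_{h\le N}\frac{(-1)^{N-h}}{N!}{N+K-1\choose N-h}\Bigl(\sum_k h_k\theta_k\Bigr)^N .
\]
By \eqref{eq:dd} and \eqref{eq:genocchi}, the left-hand side equals $g(N)=[\theta_1,\ldots,\theta_K]x^{N+K-1}$.

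For the right-hand side I would use generating functions. Expand $(\sum_k h_k\theta_k)^N$ by \eqref{eq:multinomial}; the result is a combination of $\prod_k h_k^{n_k}$ with $n=N$. The sum over $\bs h$ then factorizes after a generating-function trick. Write ${N+K-1\choose N-h}(-1)^{N-h}$ as the coefficient of $z^{N}$ in $z^{h}(1-z)^{N+K-1}$, which lets the sum over $h\le N$ be extended to all $\bs h\ge\bs 0$. One is then left with
\[
[z^N]\,(1-z)^{N+K-1}\prod_k\sum_{h_k\ge0}h_k^{n_k}z^{h_k}.
\]
Each factor $\sum_h h^{n}z^h$ is an Eulerian-polynomial expression $A_n(z)/(1-z)^{n+1}$, and the product contributes $(1-z)^{-(N+K)}$.

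The key step, and the main obstacle, is to show that
\[
[z^N]\,(1-z)^{-1}\prod_k A_{n_k}(z)=\prod_k n_k!.
\]
Here $A_n(z)$ has degree $n$ (it is $z$ times the degree-$(n-1)$ Eulerian polynomial) and $A_n(1)=n!$. Because $\prod_k A_{n_k}$ has degree exactly $N$, the coefficient $[z^N]$ of $(1-z)^{-1}\prod_k A_{n_k}(z)$ is the sum of all coefficients of $\prod_k A_{n_k}$. That sum is $\prod_k A_{n_k}(1)=\prod_k n_k!$.

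With this, the right-hand side becomes $\sum_{n=N}\prod_k\theta_k^{n_k}$ times multinomial factors. This matches the Dirichlet-integral expansion \eqref{eq:dirichlet} of the left-hand side term by term, which closes the argument.

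I would double-check the degree bookkeeping carefully, since the case $n_k=0$ gives $A_0=1$, which has degree $0$ and a different form. Alternatively, one could apply this computation directly to each monomial $\prod_k u_k^{n_k}$ of $P$ and skip the single-class reduction entirely.
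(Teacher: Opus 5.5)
Your proof is correct, but it takes a different route from the paper.

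The paper builds on Corollary~\ref{cor:Gps}. It applies the finite-difference identity \eqref{eq:prodsum} to each monomial of the single-class constant \eqref{eq:gdef} and substitutes the result into \eqref{eq:Gps}. It then uses \eqref{eq:prodsum} a second time to eliminate $\bs t$, and collapses the sum over $\bs m$ with the Vandermonde-type identity \eqref{eq:prodbin}.

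You instead go from Theorem~\ref{thm:nois} and the Dirichlet integral \eqref{eq:dirichlet} to the moment identity
\[
\sum_{\bs h\geq \bs 0,\,h\leq N} (-1)^{N-h}{N+K-1\choose N-h}\prod_{k=1}^K h_k^{n_k}=\prod_{k=1}^K n_k!, \qquad \textstyle\sum_k n_k=N,
\]
and prove it with the generating function and Eulerian polynomials. That step is sound. Only $\deg\prod_k A_{n_k}\leq N$ and $A_n(1)=n!$ are needed, and the case $n_k=0$ is covered by $A_0=1$.

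What your route buys:
\begin{itemize}
\item It is self-contained: it does not rely on the externally cited identity \eqref{eq:prodsum} or on Corollary~\ref{cor:Gps}.
\item It shows that \eqref{eq:Gms} is an exact lattice-point rule for $(N+K-1)!\int_{\Delta_K}$ on homogeneous polynomials of degree $N$.
\item The same computation shows homogeneity is essential. For a monomial of degree $n<N$ the extracted coefficient is $[z^N](1-z)^{N-n-1}\prod_k A_{n_k}(z)=0$, while the simplex integral is nonzero.
\end{itemize}

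What the paper's route buys: it avoids Eulerian numbers and uses only binomial identities. It also exhibits \eqref{eq:Gms} as a resummation of \eqref{eq:Gps}, consistent with its finite-difference framework.

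Two steps in your write-up are unnecessary: the preliminary single-class reduction via \eqref{eq:prodsum2}, and the appeal to \eqref{eq:dd} and \eqref{eq:genocchi}. The direct monomial version you mention at the end is the cleaner form of your argument.
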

\begin{proof}
Observe that the specialization of \eqref{eq:bcmp} to single-class models is
\begin{equation}
\label{eq:gdef}
g_{\t\bs t}(N)=\sum_{\bs m\in \mathcal{S}_K} \prod_{k=1}^K \theta^{m_k}_k
\end{equation}
Applying \eqref{eq:prodsum} to \eqref{eq:gdef} with $a_k=\theta_k$, and using the definition of $\mathcal{S}_K$ yields
\[
 g_{\t\bs t}(N) = \sum_{\substack{\bs m\geq \bs 0:\\ m=N}} \sum_{\bs 0\leq \bs h\leq \bs m} \fix{\frac{(-1)^{N-h}}{N!}}\prod_{j=1}^K {m_j \choose h_j}\left(\sum_{i=1}^K \sum_{r=1}^R h_i t_r \theta_{ir}\right)^{N}
\]
with $h=\sum_{i} h_i$.
Plugging the last formula into \eqref{eq:Gps} and using \eqref{eq:prodsum} to eliminate the dependence on~$\bs t$ gives after rearranging terms
\[
\label{eq:intermediate2}
 \G=\sum_{\substack{\bs m\geq \bs 0:\\ m=N}}  \sum_{\bs 0\leq \bs h \leq \bs m}\frac{(-1)^{N-h} }{N_1!\cdots N_R!} \prod_{j=1}^K  {m_j \choose h_j} \prod_{r=1}^R {\left(\sum_{i=1}^K h_i\theta_{ir}\right)^{N_r}}
\]
We can here use a single summation on $\bs h\geq \bs 0$, $h\leq N$, after noting that
\begin{equation}
\label{eq:prodbin}
\sum_{\substack{\bs m\geq \bs h:\\ m=N}} \prod_{i=1}^K  {m_i \choose h_i}= {N+K-1\choose N-h}
\end{equation}
This identity can be proved by first rewriting the expression in terms of $\bs v=\bs m-\bs h\geq \bs 0$ and
then iteratively applying a corollary of Vandermonde's convolution~\cite[Eq. 3]{Gou56}.
\end{proof}
This explicit form requires $\mathcal{O}(N^K)$ time and $\mathcal{O}(1)$ space, which makes it preferable to \eqref{eq:Gps} on models with many classes, but a small number of nodes. To the best of our knowledge, \eqref{eq:Gps} and \eqref{eq:Gms} are the only exact and tractable algebraic expressions for $\G$ with a $\mathcal{O}(1)$ space requirement. This improves over the space requirements of recursive algorithms such as CA and RECAL, while retaining the same time complexities, which may be useful in cases where one wants to solve several models in parallel without incurring into memory bottlenecks. However, in practice the above expressions are applicable only to models where $H=\min(K,R)$ is not too large ({\em e.g.}, $H\leq 4$), typically with up to a few tens of jobs. Moreover, due to the large magnitude of the terms, multi-precision arithmetic should be used to avoid numerical issues upon computing \eqref{eq:Gps} and \eqref{eq:Gms}. The techniques developed later do not suffer these problems and can help to approximate larger models.


\subsection{Infinite server nodes}
Consider now a model where the first $K$ nodes are single-server queues and the remaining $M-K$ nodes are infinite servers. The following corollary generalizes the integral form.
\begin{corollary}
\label{thm:is}
In a model including infinite server nodes
\begin{equation}
\label{eq:GZsimplex}
\G= \int_{\Delta_{K}}  \int_{v=0}^{+\infty} \frac{e^{-v} v^{K-1}}{N_1!\cdots N_R!} \prod_{r=1}^R \Biggl(\sigma_r+v\sum_{k=1}^{K} \theta_{kr}u_k\Biggr)^{N_r} dv\,d\bs u
\end{equation}
\end{corollary}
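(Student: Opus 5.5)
The plan is to start from the integral form \eqref{eq:intform} over $\mathbb{R}^K_+$, which already accounts for the infinite-server demands through the constants $\sigma_r$. I will then change to "polar" coordinates adapted to the simplex, $\bs y = v\bs u$, with $v=y=\sum_k y_k\in[0,\infty)$ and $\bs u\in\Delta_K$.

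\emph{Step 1: the Jacobian.} This map sends $\mathbb{R}^K_+$ bijectively (up to the null set $v=0$) onto $(0,\infty)\times\Delta_K$. Parametrize $\Delta_K$ by its first $K-1$ coordinates, so that $u_K=1-\sum_{k<K}u_k$, as is implicit in the normalization used in \eqref{eq:dirichlet}. The Jacobian of $(v,u_1,\ldots,u_{K-1})\mapsto \bs y$ then equals $v^{K-1}$, the standard computation behind the Dirichlet and gamma integrals. Hence $d\bs y = v^{K-1}\,dv\,d\bs u$.

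\emph{Step 2: substitution.} Under this map $e^{-y}=e^{-v}$ and $\sum_k\theta_{kr}y_k = v\sum_k\theta_{kr}u_k$. Substituting into \eqref{eq:intform} and applying Fubini/Tonelli, which is justified because the integrand is nonnegative, gives \eqref{eq:GZsimplex} directly.

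\emph{Step 3: sanity check.} Setting $\sigma_r=0$, the $v$-integral becomes $\int_0^\infty e^{-v}v^{N+K-1}dv=(N+K-1)!$, which recovers Theorem~\ref{thm:nois}. This confirms both the Jacobian and the measure convention on $\Delta_K$. The check is worth doing because the convention for $d\bs u$ must match the one used in Theorem~\ref{thm:nois}.

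An alternative route avoids \eqref{eq:intform} and works from \eqref{eq:Gdef} instead. Expand each factor $(\sigma_r+v\sum_k\theta_{kr}u_k)^{N_r}$ by the multinomial theorem \eqref{eq:multinomial}, splitting the class-$r$ jobs into those at the infinite servers and those at the queues. Integrate over $v$ using the gamma integral, giving $(n+K-1)!$ where $n$ is the number of jobs at the queues, and then over $\bs u$ using \eqref{eq:dirichlet}. The factor $(n+K-1)!$ cancels, leaving $\prod_k n_k!\prod_{k,r}\theta_{kr}^{n_{kr}}/n_{kr}!$. The remaining $\sigma_r^{m_r}/m_r!$ factors combine with the infinite-server terms of \eqref{eq:Gdef} through the multinomial theorem across the $M-K$ infinite servers.

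The main obstacle is only bookkeeping: getting the Jacobian $v^{K-1}$ and the simplex measure consistent with Theorem~\ref{thm:nois}. The $\sigma_r=0$ check settles this.
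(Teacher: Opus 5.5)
Your proof is correct, but your main route differs from the paper's; the paper's proof is essentially your ``alternative route.'' The paper substitutes $\sigma_r=\sum_{k=K+1}^M\theta_{kr}$ and expands the integrand by the multinomial theorem into a sum over $\mathcal{S}_M$. It then evaluates $\int_0^\infty e^{-v}v^{n+K-1}\,dv=(n+K-1)!$ and the Dirichlet integral \eqref{eq:dirichlet} term by term, and arrives at \eqref{eq:Gdef}.

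Your primary argument instead obtains \eqref{eq:GZsimplex} from \eqref{eq:intform} by the substitution $\bs y=v\bs u$. Your Jacobian $v^{K-1}$ is correct, and so is your identification of the simplex measure: Lebesgue measure in $(u_1,\ldots,u_{K-1})$, with total mass $1/(K-1)!$, which is the convention forced by \eqref{eq:dirichlet}.

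Your route is shorter and explains more. It shows that the factor $v^{K-1}$ is just a Jacobian, and it makes concrete the link between \eqref{eq:intform} and the simplex forms, which the paper only mentions in passing after Theorem~\ref{thm:nois}. Specialized to $\sigma_r=0$, it also yields Theorem~\ref{thm:nois} directly from \eqref{eq:intform}, bypassing the divided-difference argument.

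The price is that your route rests on \eqref{eq:intform}, which the paper quotes from \cite{McKM84} rather than proves; its justification is itself a gamma-integral and multinomial expansion. It also needs a change-of-variables step whose measure convention must be matched. The paper's term-by-term check is self-contained and verifies the formula directly against the definition \eqref{eq:Gdef}. It uses only the two elementary integrals already used in the short proof of Theorem~\ref{thm:nois}.
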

\begin{proof}
{
We plug the definition of $\sigma_r=\fix{\sum_{k=K+1}^{M} \theta_{kr}}$ in \eqref{eq:GZsimplex} and
use the multinomial theorem~\eqref{eq:multinomial} to write
\[
\G=\sum_{\bs n\in \mathcal{ S}_M} \int_{\Delta_{K}}\int_{v=0}^{+\infty}{e^{-v} v^{n+K-1}} \prod_{i=1}^M\prod_{r=1}^R \frac{\theta_{ir}^{n_{ir}}}{n_{ir}!} \prod_{k=1}^K u_k^{n_k} dv\,d\bs u
\]
where $n=\sum_{k=1}^K n_k$ and $n_k=\sum_{r} n_{kr}$. Noting that $\int_{v=0}^{+\infty} {e^{-v} v^{n+K-1}}dv=(n+K-1)!$, we obtain \eqref{eq:Gdef} by \eqref{eq:dirichlet}.
}
\end{proof}
{
}

\subsection{Numerical evaluation}
{
Cubature rules are interpolation formulas that approximate a multidimensional integral by computing the integrand at a finite set of points~\cite{Coo03}. For polynomial integrands, cubature rules may also allow the exact evaluation of the integral, if interpolation occurs at a large enough set of points. An advantage of \eqref{eq:Gsimplex} over \eqref{eq:intform} is that it expresses $\G$ as an integral of a polynomial over the simplex, making it suitable for application of cubature rules.

We focus here on Grundmann-M\"oller (GM) cubature rules, which are tailored to the exact and approximate integration of polynomials over the simplex~\cite{GruM78}.} Applying directly the definition of GM cubature rule of degree $2S+1$ to \eqref{eq:Gsimplex} leads to the following expression~\cite{GruM78} 
\begin{equation}
 \label{eq:grnmol}
\G = \frac{(N+K-1)!}{N_1!\cdots N_R!}\sum_{i=0}^S w_i \sum_{\substack{\bs b\geq \bs 0:~\\b=S-i}} \prod_{r=1}^R\Biggl(\sum_{j=1}^K\frac{(2b_j+1)\theta_{jr}}{(2S+K-2i)}\Biggr)^{N_r}
\end{equation}
where $\bs b\in \mathbb{N}^K$, $b=\sum_{i=1}^K b_i$, and the weights are
 \[
 w_i = (-1)^i2^{-2S}\frac{(2S+K-2i)^{2S+1}}{i!(2S+K-i)!}
 \]
The number of points in the rule~\eqref{eq:grnmol} is $L={K+S\choose S}$, thus worst-case complexity is $\mathcal{O}(S^K)$ time and $\mathcal{O}(1)$ space.

\begin{figure}[t!]
\centering
\subfigure [$S=8$] {
 {\resizebox{0.3\linewidth}{!}{\includegraphics[]{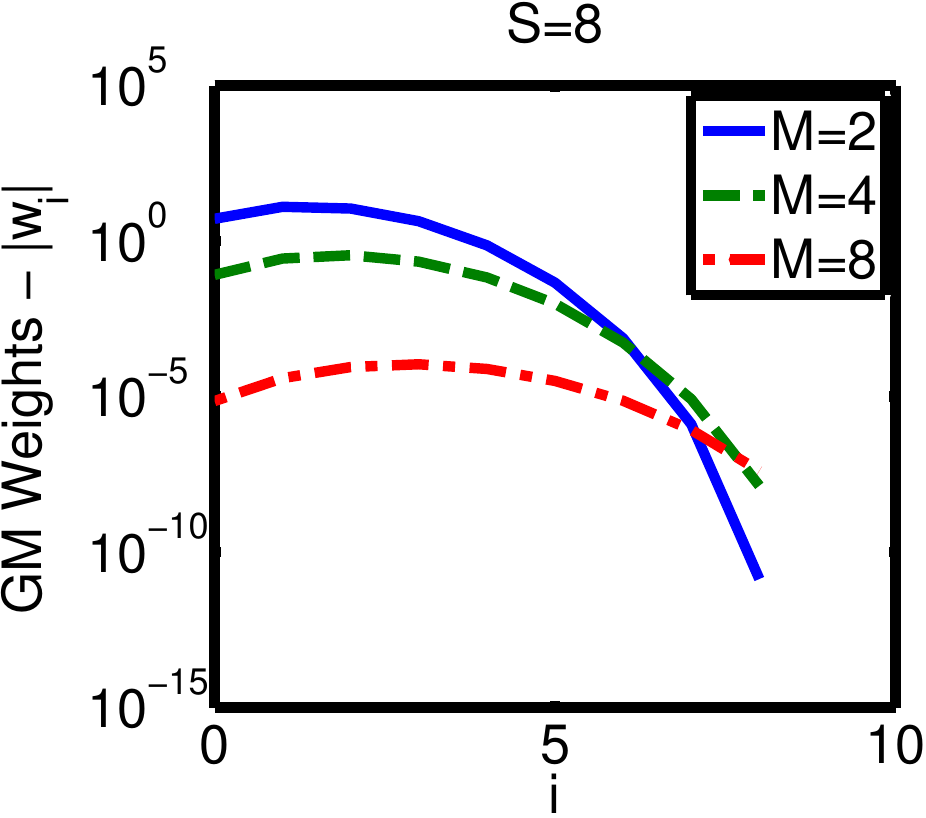}}}
 }\qquad
\subfigure [$S=16$] {
 {\resizebox{0.3\linewidth}{!}{\includegraphics[]{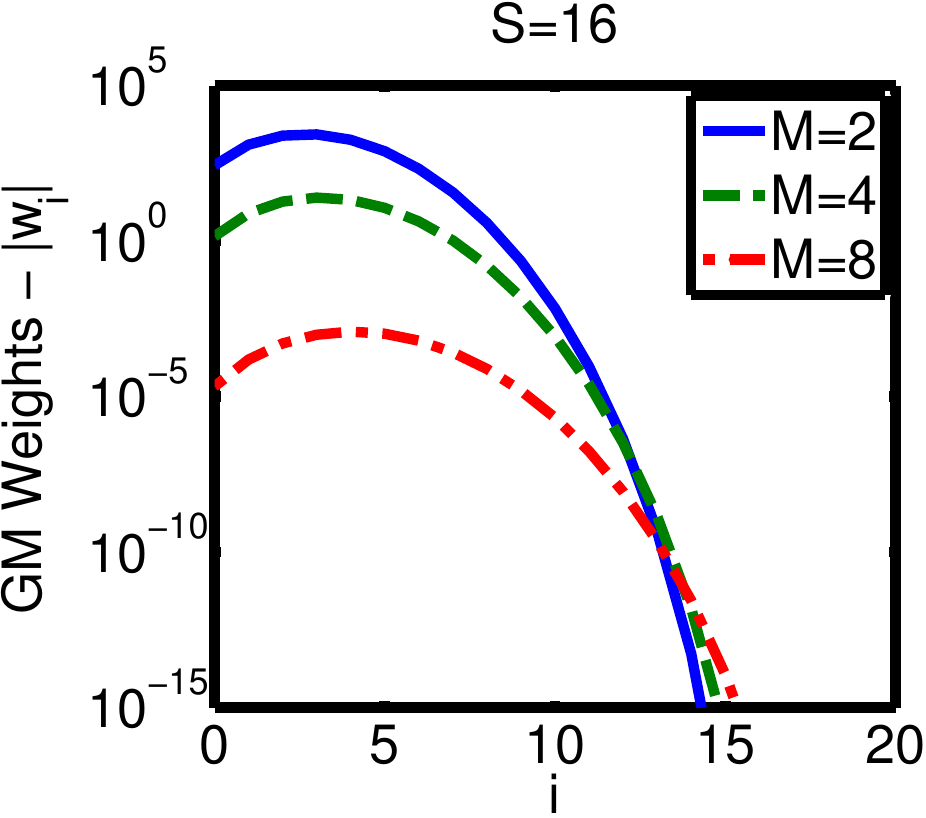}}}
 }
 \caption{Grundmann-M\"oller weights $w_i$ for models with $M=K$ single-server nodes}
 \label{fig:weights}
\end{figure}

If the integrand is a multivariate polynomial of degree $N$, then a GM cubature rule of degree $S=\lceil(N-1)/2\rceil$ returns the exact value of the integral in $\mathcal{O}((N/2)^K)$ time and $\mathcal{O}(1)$ space~\cite{GruM78}. This is indeed the case for both \eqref{eq:Gsimplex} and \eqref{eq:GZsimplex}. In the case without infinite servers, this is evident since the integrand is a product of linear forms. We now show that the same conclusion holds in models with infinite servers. Let ${G^{\bs u}_\t(\bs N)}$ be the normalizing constant for a model composed of an infinite server node with demand~$\sigma_r$ and $K$ identical single-server nodes having class-$r$ demand $\tilde{\theta_{r}}=\sum_k \theta_{kr}u_k$. We have the following result.
\begin{proposition}
\label{pro:moment-partition}
$$\int_{v=0}^{+\infty}\frac{e^{-v} v^{K-1}}{N_1!\cdots N_R!} \prod_{r=1}^R \Biggl(\sigma_r+v\sum_{k=1}^K \theta_{kr}u_k\Biggr)^{N_r} dv=\Gamma(K){G^{\bs u}_\t(\bs N)}$$
\end{proposition}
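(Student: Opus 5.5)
Fix $\bs u\in\Delta_K$ and write $\tilde\theta_r=\sum_k\theta_{kr}u_k$. The plan is to expand both sides as finite sums over the number of jobs sitting at the single-server part of the network and to match coefficients.

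\textbf{Left-hand side.} Apply the multinomial theorem \eqref{eq:multinomial} (the binomial case) to each factor $(\sigma_r+v\tilde\theta_r)^{N_r}$. This gives
\[
\frac{1}{N_1!\cdots N_R!}\prod_{r=1}^R\Bigl(\sigma_r+v\tilde\theta_r\Bigr)^{N_r}=\sum_{\bs 0\leq \bs m\leq \bs N}\prod_{r=1}^R\frac{\sigma_r^{N_r-m_r}}{(N_r-m_r)!}\,\frac{(v\tilde\theta_r)^{m_r}}{m_r!},
\]
where $m_r$ counts the class-$r$ jobs outside the infinite server and $m=\sum_r m_r$. The sum is finite, so we may integrate term by term against $e^{-v}v^{K-1}$. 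Using $\int_0^\infty e^{-v}v^{m+K-1}dv=(m+K-1)!$, the left-hand side becomes
\[
\sum_{\bs 0\leq\bs m\leq\bs N}(m+K-1)!\prod_{r=1}^R\frac{\sigma_r^{N_r-m_r}}{(N_r-m_r)!}\frac{\tilde\theta_r^{m_r}}{m_r!}.
\]

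\textbf{Right-hand side.} Write ${G^{\bs u}_\t(\bs N)}$ from \eqref{eq:Gdef}, splitting each state into the infinite-server part, with populations $N_r-m_r$, and the part at the $K$ identical queues. For a fixed vector $\bs m$, the infinite-server node contributes $\prod_r\sigma_r^{N_r-m_r}/(N_r-m_r)!$. The $K$ identical queues contribute $\frac{1}{m_1!\cdots m_R!}$ times a sum over distributions of $\bs m$ among the queues. Because the demands $\tilde\theta_r$ are identical at every queue, the product $\prod_r\tilde\theta_r^{m_r}$ factors out, and what remains is a pure multinomial count.

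The count is obtained as follows. By Theorem~\ref{thm:nois} applied to that subnetwork (no infinite servers, demands $\tilde\theta_r$ at every node), or directly from the Dirichlet integral \eqref{eq:dirichlet}, the single-server subnetwork has normalizing constant
\[
\frac{(m+K-1)!}{m_1!\cdots m_R!}\int_{\Delta_K}\prod_r\tilde\theta_r^{m_r}\,d\bs w=\frac{(m+K-1)!}{(K-1)!}\prod_r\frac{\tilde\theta_r^{m_r}}{m_r!}.
\]
This uses $\int_{\Delta_K}d\bs w=1/(K-1)!$, which is \eqref{eq:dirichlet} with all exponents zero.

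Multiplying by the infinite-server factor and summing over $\bs m$ gives ${G^{\bs u}_\t(\bs N)}$. Multiplying by $\Gamma(K)=(K-1)!$ then reproduces the left-hand side exactly.

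\textbf{Main obstacle.} The delicate step is the counting identity for $K$ identical queues,
\[
\sum_{\bs n_1+\dots+\bs n_K=\bs m}\prod_k\frac{n_k!}{\prod_r n_{kr}!}=\frac{(m+K-1)!}{(K-1)!\,\prod_r m_r!}.
\]
I will obtain it via Theorem~\ref{thm:nois} (or equivalently the Dirichlet integral) rather than by direct combinatorics. I also need to check the normalization convention of $\Delta_K$, namely that it is $K-1$-dimensional with volume $1/(K-1)!$; this must be consistent with \eqref{eq:dirichlet} as stated in the paper.
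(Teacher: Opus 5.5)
Your proof is correct and follows the same route as the paper. Both arguments reduce the statement to the convolution formula $G^{\bs u}_\t(\bs N)=\sum_{\bs 0\leq \bs m\leq \bs N}\frac{(m+K-1)!}{(K-1)!}\prod_{r=1}^R\frac{\tilde\theta_r^{m_r}}{m_r!}\,\frac{\sigma_r^{N_r-m_r}}{(N_r-m_r)!}$, and then match it to the left-hand side using the binomial theorem and $\int_0^{\infty}e^{-v}v^{m+K-1}\,dv=(m+K-1)!$. The only difference is that the paper takes this formula from \cite[Thm.~2]{Cas11a}, whereas you derive the identical-queue factor from Theorem~\ref{thm:nois}, where the integrand is constant on $\Delta_K$ and its volume $1/(K-1)!$ is \eqref{eq:dirichlet} with zero exponents; this makes your argument self-contained.
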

\begin{proof}{
Let $\bs n\in \mathbb{N}^R$, $n=\sum_r n_r$. Using \cite[Thm. 2]{Cas11a} we get
\[
{G^{\bs u}_\t(\bs N)}=\sum_{\bs 0\leq \bs n\leq \bs N}{n+K-1 \choose n} n!\prod_{r=1}^R \frac{\tilde{\theta}^{n_{r}}_{r}}{n_{r}!}\cdot\frac{\sigma_{r}^{N_r-n_{r}}}{(N_r-n_{r})!} 
=\sum_{\bs 0\leq \bs n\leq \bs N}\frac{\Gamma{(n+K)}}{\Gamma(K)}\prod_{r=1}^R \frac{\tilde{\theta}^{n_{r}}_{r}}{n_{r}!}\cdot\frac{\sigma_{r}^{N_r-n_{r}}}{(N_r-n_{r})!}
\]
We now plug the integral expression of $\Gamma{(n+K)}$ and the statement follows by the multinomial theorem~\eqref{eq:multinomial} and by definition of $\tilde{\theta_{r}}$.
}
\end{proof}
Since ${G^{\bs u}_\t(\bs N)}$ is a normalizing constant with demands $\tilde{\theta}_r$ that are linear functions of $\bs u$, by definition it is a multivariate polynomial of degree $N$ in $\bs u$. The theorem thus confirms that the inner integral in \eqref{eq:GZsimplex} is a polynomial of degree $N$. Thus a GM rule with $S=\lceil(N-1)/2\rceil$ returns the exact value of $\G$ also in the presence of infinite servers. Therefore, similarly to \eqref{eq:Gps} and \eqref{eq:Gms}, \eqref{eq:grnmol} provides an exact expression for $\G$ that is both explicit and tractable.

Using smaller values of $S$ trades accuracy for speed, as it is possible to approximate $\G$ by truncation of the outer summation of \eqref{eq:grnmol}. This is an effective procedure thanks to the rapid decay of the weights $w_i$, as illustrated in Figure \ref{fig:weights}. For large enough $i$, the weights quickly and monotonically decrease, thus a few outer iterations of \eqref{eq:grnmol} are sufficient to return a good approximation. As we show later, GM rules perform very well on small and medium-sized inference problems. However, as the model size grows, the asymptotic expansions introduced in the next sections are normally more efficient.

\section{Asymptotic Expansion}
\label{sec:ae}

\subsection{Preliminaries}
We now exploit the geometry of the unit simplex to derive an asymptotic expansion for $\G$. Our approach first applies a logistic transformation to the integration variables in its integral form~\cite{Ait82}. This is a classic method to map integrands defined over the $n$-dimensional simplex to $\mathbb{R}^{n}$. We find after this transformation that, as $N$ grows, the integrand becomes increasingly peaked at a unique point in the interior of the integration domain, satisfying the conditions for Laplace's method~\cite{Kass90}.

{
A technical requirement for our argument to hold is that all queue-lengths grow asymptotically large as $N\to +\infty$, a property which is violated by non-bottleneck nodes. To address this issue, we introduce a novel scaling where we also slowly increase at every node a population of jobs that permanently reside at the node itself, perpetually self-looping. That is, we introduce $K$ additional classes, each with population $\epsilon N$, $\epsilon>0$, where the $i$th class is composed by jobs that self-loop at node $i$, placing a unit service demand at each visit. In this way we are considering the perturbed normalizing constant
\begin{equation}
G_\t^\epsilon(\bs N)=\fix{\frac{1}{N_1!\cdots N_R!((\epsilon N)!)^K}} \int_{v=0}^{+\infty}  e^{-v}v^{K(1+\epsilon N)-1} \int_{\Delta_{K}}\prod_{i=1}^K u_i^{\epsilon N} \prod_{r=1}^R \Biggl(\sigma_r+v\sum_{k=1}^{K} \theta_{kr}u_k\Biggr)^{N_r} d\bs u\,dv
\end{equation}
where $\eta_\epsilon(N)=N+K(1+\epsilon N)$ and $\lim_{\epsilon\to 0^+} G^{\epsilon}_\t(\bs N)=\G$.}


Our asymptotic expansion is proved for an auxiliary function $I_{\epsilon}(N)$, which uniquely defines $G_\t^\epsilon(\bs N)$. This function is defined as follows. First, we allow for real values of $\epsilon$ by expressing where needed factorials in $G_\t^\epsilon(\bs N)$ using the gamma function $\Gamma(\cdot)$. Without loss of generality, we then normalize the service demands in the auxiliary function to range in $[0,1]$. That is, we set
\begin{equation}
\label{eq:Geps}
G^{\epsilon}_\t(\bs N)=\frac{\Gamma(\eta_\epsilon(N))}{N_1!\cdots N_R!(\Gamma(1+\epsilon N))^K}\,{I_{\epsilon}(N)}\,\prod_{r=1}^R \alpha_r^{N_r}
\end{equation}
where $\alpha_r=\sigma_r + \max_{i} \theta_{ir}$ and we define the auxiliary function
\begin{equation}
\label{eq:Idef}
I_{\epsilon}(N)= \frac{1}{\Gamma(\eta_\epsilon(N))} \int_{v=0}^{+\infty}  e^{-v}v^{K(1+\epsilon N)-1}  \int_{\Delta_{K}} \prod_{i=1}^K u_i^{\epsilon N} \prod_{r=1}^R \Biggl(\tilde{\sigma}_r+v\sum_{k=1}^{K}\tilde{\theta}_{kr}u_k\Biggr)^{N_r} d\bs u\,dv
\end{equation}
with \,$\tilde{\theta}_{kr}=\alpha^{-1}_r\theta_{kr}$, and $\tilde{\sigma}_r=\sum_{k=K+1}^M \tilde{\theta}_{kr}$. From now on, and without loss of generality, we focus on $I_{\epsilon}(N)$ and to simplify notation use ${\theta}_{kr}$ and ${\sigma}_{r}$ in place of $\tilde{\theta}_{kr}$ and $\tilde{\sigma}_r$, subject to ${\theta}_{kr}\leq 1$ and ${\sigma}_{r}\leq 1$. Moreover, we assume that the ratios $\beta_r=N_r/N$ remain constant while increasing $N$\fix{, and that every class places a nonzero demand at some single-server node, $\sum_{k=1}^K\theta_{kr}>0$ for all $r$, so that the quantity $S_r(\bs u)=\sum_{k=1}^K\theta_{kr}u_k$ is not identically zero on $\Delta_K$}.

\subsection{Laplace's method}
\label{sec:laplaceapprox}
We first obtain the asymptotic approximation for $I_\epsilon(N)$ at $\hat{\bs u}$ in a model with single-server nodes only. 
This method requires the usual local regularity and global localization properties of Laplace's method~\cite{Kass90}. \fix{For the integral \eqref{eq:Iexp} at hand, we verify these properties directly, without attempting to establish the more general result in \cite{Kass90}.} After showing that
\begin{equation}
\label{eq:Iexp}
I_{\epsilon}(N)=  \int_{\mathbb{R}^{K-1}} e^{-N h_{N}(\bs x)} d\bs x
\end{equation}
for smooth and infinitely differentiable $h_{N}(\bs x)$, having constant order with respect to $N$ and bounded derivatives, we show that, for fixed $\epsilon>0$:
\squishlist
\item Condition 1: $I_{\epsilon}(N)$ exists and it is finite;
\item Condition 2: $h_{N}(\bs x)$ has a unique stationary point in the interior of the integration domain of \eqref{eq:Iexp}, which is its global minimum;
\item Condition 3: the Hessian of $h_{N}(\bs x)$ \fix{is positive definite} at its stationary point.
\squishend
\fix{In addition, the strict concavity and boundary behaviour established in the proof imply the standard localization property: outside any fixed neighbourhood of the stationary point the exponent is uniformly smaller, for all sufficiently large $N$.} Hence Laplace's method~\cite{Kass90} provides an asymptotic approximation \fix{with $\mathcal{O}(N^{-1})$ relative error}. Higher-order expansions may also be considered, but their computational cost grows quickly with the number of nodes in the model~\cite{Kass90}.
\begin{theorem}
\label{thm:laplace}
In a closed network without infinite servers, \fix{for every fixed \(\epsilon>0\), as \(N\to\infty\) while keeping \(N_r/N=\beta_r\) fixed,}
\begin{equation}
\label{eq:laplace}
I_{\epsilon}(N) = \sqrt{\frac{(2\pi)^{K-1}}{\det(\bs A)}}\prod_{r=1}^R \left(\sum_{k=1}^K \theta_{kr} \hat{u}_k\right)^{N_r}\prod_{i=1}^K \hat{u}^{1+\epsilon N}_i 
\end{equation}
where $\hat{\bs u}$ is the unique solution in $\Delta_K$ of the system of nonlinear equations
\begin{equation}
\label{eq:nonlinuv}
{
  \begin{aligned}
u_i&={\eta^{-1}_\epsilon(N)}\left(1+\epsilon N+\sum_{r=1}^R {\xi_r({\bs u}) {\theta}_{ir}u_i}\right)\quad i=1,\ldots,{K-1}
  \end{aligned}
  }
\end{equation}
with $\xi_r({\bs u})={N_r}(\sum_{k=1}^{K} {\theta}_{kr}u_k)^{-1}$, and where \fix{${\bs A}=[A_{ij}]_{i,j=1}^{K-1}$ is positive definite}, with entries
\begin{equation}
\label{eq:A}
A_{ij}=\begin{cases}
-\displaystyle\sum_{\substack{h=1\\ h\neq i}}^{K}
\left(\displaystyle\sum_{r=1}^R
\frac{N_r\theta_{ir}\theta_{hr}}
{\left(\sum_{k=1}^K\theta_{kr}\hat{u}_k\right)^2}
-\eta_\epsilon(N)\right)\hat{u}_i\hat{u}_h, & i=j,\\[3mm]
\displaystyle\left(\sum_{r=1}^R
\frac{N_r\theta_{ir}\theta_{jr}}
{\left(\sum_{k=1}^K\theta_{kr}\hat{u}_k\right)^2}
-\eta_\epsilon(N)\right)\hat{u}_i\hat{u}_j, & i\neq j.
\end{cases}
\end{equation}
for $i,j=1,\ldots,K-1$.
\end{theorem}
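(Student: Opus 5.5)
The plan is to reduce $I_\epsilon(N)$ to a $(K-1)$-dimensional integral of the form \eqref{eq:Iexp} and then check Conditions~1--3 directly. Without infinite servers we have $\sigma_r=0$, so the integrand of \eqref{eq:Idef} equals $v^{N}\prod_r S_r(\bs u)^{N_r}$ times the $v$-factor. The $v$-integral therefore gives $\int_0^\infty e^{-v}v^{\eta_\epsilon(N)-1}dv=\Gamma(\eta_\epsilon(N))$, which cancels the prefactor, and
\[
I_\epsilon(N)=\int_{\Delta_K}\prod_{i=1}^K u_i^{\epsilon N}\prod_{r=1}^R S_r(\bs u)^{N_r}\,d\bs u .
\]
Next I apply the logistic transformation $u_i=e^{x_i}/(1+\sum_{j<K}e^{x_j})$ for $i<K$, with $u_K=1/(1+\sum_{j<K}e^{x_j})$. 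Its Jacobian is $\prod_{i=1}^K u_i$, so
\[
I_\epsilon(N)=\int_{\mathbb{R}^{K-1}}e^{f(\bs x)}d\bs x,\qquad f=\sum_{i=1}^K(1+\epsilon N)\log u_i+\sum_{r=1}^R N_r\log S_r(\bs u).
\]
We set $h_N=-f/N=-\sum_i(\epsilon+1/N)\log u_i-\sum_r\beta_r\log S_r$. This is smooth and of constant order in $N$, and it converges to an $N$-independent limit as $N\to\infty$.

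\emph{Condition 1.} Since $\theta_{kr}\le 1$, we have $S_r\le 1$ and $u_i\le 1$. The integrand on the compact simplex is thus bounded, so $I_\epsilon(N)$ is finite.

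\emph{Stationary point (Condition 2).} Using $\partial\log u_i/\partial x_j=\delta_{ij}-u_j$, together with $\sum_i(1+\epsilon N)+\sum_r N_r=\eta_\epsilon(N)$ and $\sum_r N_r S_r/S_r=N$, I obtain
\[
\partial f/\partial x_j=(1+\epsilon N)-\eta_\epsilon(N)u_j+u_j\sum_r\xi_r(\bs u)\theta_{jr}.
\]
Setting this to zero gives exactly \eqref{eq:nonlinuv}. For uniqueness I would not argue concavity in $\bs x$; instead I work on the simplex. Critical points of $f$ in $\bs x$ correspond bijectively to interior constrained critical points of $F(\bs u)=\sum_i(1+\epsilon N)\log u_i+\sum_r N_r\log S_r(\bs u)$ on $\Delta_K$. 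The function $F$ is strictly concave on the interior: each $\log S_r$ is concave as the log of a linear form, and $\sum_i\log u_i$ is strictly concave because $\epsilon>0$. Moreover $F\to-\infty$ at the boundary. Hence $F$ has a unique maximizer, this is its only critical point, and $\hat{\bs u}$ is the global minimum of $h_N$.

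\emph{Hessian (Condition 3).} By the chain rule, $\nabla_x^2 f=J^T\nabla_u^2F\,J+\sum_k(\partial F/\partial u_k)\nabla_x^2u_k$. At the critical point, the Lagrange condition gives $\nabla_uF=\lambda\bs 1$. Since $\sum_k u_k\equiv1$, we have $\sum_k\nabla_x^2u_k=0$, so the second term vanishes. The Jacobian $J$ has full rank $K-1$ onto the tangent space of the simplex, so $\nabla_x^2f=J^TH J$ is negative definite. The matrix $\bs A=-\nabla_x^2 f(\hat{\bs x})$ is therefore positive definite. Differentiating $\partial f/\partial x_j$ once more and simplifying with the stationarity equations should produce the entries \eqref{eq:A}. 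I expect this to be a routine but tedious calculation; for the diagonal entries the simplification relies on $\sum_h\hat u_h=1$.

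\emph{Laplace step and main obstacle.} The standard formula gives $I_\epsilon(N)\sim e^{f(\hat{\bs x})}(2\pi)^{(K-1)/2}\det(\bs A)^{-1/2}$, and $e^{f(\hat{\bs x})}$ is precisely the product in \eqref{eq:laplace}. I expect the hardest part to be the uniform localization on the unbounded domain $\mathbb{R}^{K-1}$ with an $N$-dependent exponent. Here I would use two facts. First, as $|\bs x|\to\infty$ some $u_i\to0$, so $-h_N\le(\epsilon)\log\min_i u_i+C$ decays linearly in $|\bs x|$, uniformly in $N$; this follows from $\log S_r\le0$ and $\log u_i\le0$. Second, since $h_N\to h_\infty$ uniformly on compacts with a nondegenerate unique minimizer, $h_N-h_N(\hat{\bs x}_N)\ge\delta>0$ holds outside a fixed neighbourhood for all large $N$. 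With these two facts, the tail contributes $e^{-N\delta}$ times an integrable bound, and the local Taylor expansion yields the $\mathcal{O}(N^{-1})$ relative error.
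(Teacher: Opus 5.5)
Your proposal is correct and follows essentially the same route as the paper: reduction to the simplex integral, the additive logistic transform with Jacobian $\prod_i u_i$, and uniqueness of the stationary point via strict concavity of the log-integrand on $\Delta_K$ (your $F$ is the paper's $N\psi_N$). It then uses the factorization $\bs A=\bs J^T(-\nabla^2F)\bs J$, obtained from $\sum_k u_k\equiv 1$ and the Lagrange condition, and localization from uniform convergence to the $\epsilon$-dependent limit plus decay at the boundary. Your explicit tail bound $-h_N\le\epsilon\log\min_i u_i\le-\epsilon\|\bs x\|_\infty$ is a slightly more concrete version of the paper's boundary argument. One small correction: at finite $N$, strict concavity of $F$ already follows from the coefficient $1+\epsilon N>0$, so $\epsilon>0$ is really needed only for the limit $h_\infty$ and for this uniform tail bound.
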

The last result provides by \eqref{eq:Geps} an approximation for $G^{\epsilon}_\t(\bs N)$. The role of the $\epsilon$ parameter is to ensure that the stationary point of the integrand of \eqref{eq:Gsimplex} remains in the interior of the integration domain \fix{uniformly in $N$ for fixed $\epsilon>0$}. For models with a finite $N$, the first condition holds irrespective of the value of $\epsilon$, which is needed only asymptotically, and choosing smaller values of $\epsilon$ generally returns more accurate results. \fix{We also show that, for any $\epsilon>0$, matrix $\bs A$ is positive definite, which is later used to develop a Monte Carlo integration method.}



Lastly, we note that  \eqref{eq:laplace} is a product. This is highly beneficial in applications, since we can avoid numerical difficulties associated to the rapid growth of the normalizing constant by directly computing $\log \G$. As a result, across thousands of models that we have solved in the numerical validation, we have never experienced numerical issues with \eqref{eq:laplace}. On the contrary, normalizing constant methods based on summations such as CA or \eqref{eq:grnmol} eventually fail on large models due to floating-point range exceptions and round-off errors.


\subsection{Proof of Theorem \ref{thm:laplace}}
\label{sec:proof}
The result follows by proving the assumptions of Laplace's method. Since $\sigma_r=0$, \eqref{eq:Idef} here simplifies to
\begin{equation}
\label{eq:IdefnoZ}
I_{\epsilon}(N)= \int_{\Delta_{K}} \prod_{i=1}^K u_i^{\epsilon N} \prod_{r=1}^R \Biggl(\sum_{k=1}^{K}{\theta}_{kr}u_k\Biggr)^{N_r} d\bs u
\end{equation}
We first apply an additive logistic transformation~\cite{Ait82}
\begin{equation}
\label{eq:logitu}
u_i = \begin{cases}
{e^{x_i}}{\left(1+\sum_{k=1}^{K-1} e^{x_k}\right)^{-1}} & i=1,\ldots,K-1\\
{\left(1+\sum_{k=1}^{K-1} e^{x_k}\right)^{-1}} & i=K
\end{cases}
\end{equation}
with Jacobian $$\left|\frac{\partial \bs u}{\partial \bs x}\right| = {\prod_{i=1}^{K-1} e^{x_i}}{\left(1+\sum_{k=1}^{K-1} e^{x_k}\right)^{-K}}$$
to obtain \eqref{eq:Iexp} with
\begin{equation*}
 - Nh_{N}(\bs x)=\sum_{i=1}^{K-1} (1+\epsilon N)x_i
+\sum_{r=1}^R N_r \log \left({\theta}_{Kr}+\sum_{k=1}^{K-1} {\theta}_{kr}e^{x_k}\right)
 -\eta_\epsilon(N)\log\left(1+\sum_{k=1}^{K-1} e^{x_k}\right)
\end{equation*}
Note that $h_{N}(\bs x)$ is smooth and infinitely differentiable. Moreover, $h_{N}(\bs x)$ has a constant order with respect to $N$ and its partial derivatives are also smooth and bounded at all orders. We are now ready to verify the conditions for Laplace's method given in Section~\ref{sec:laplaceapprox}.

\subsubsection{Condition 1: existence and finiteness}{
Since the logistic transformation does not affect existence and finiteness, it is sufficient to verify these properties on  \eqref{eq:IdefnoZ}. Existence follows since $\Delta_K$ is a finite domain and the integrand of \eqref{eq:IdefnoZ} exists at all points of $\Delta_K$. $I_\epsilon(N)$ is also finite for all $\epsilon>0$ and $N>0$, since we assumed throughout that ${\theta}_{kr}\leq 1$ and the domain $\Delta_K$ has a constant volume irrespective of the value of $N$.
}

\subsubsection{Condition 2: unique stationary point}
Condition 2 is verified as follows. We seek to solve $\nabla h_N(\bs x)=\bs 0$ and use the inverse transformation of \eqref{eq:logitu} to express the result over $\Delta_K$. The inverse transformation is given by \cite{Ait82}
\begin{equation}
\label{eq:invlog}
\hat{\bs x}=(\log \hat{u}_1/\hat{u}_K,\ldots,\log \hat{u}_{K-1}/\hat{u}_K)
\end{equation}
and yields the system of nonlinear equations \eqref{eq:nonlinuv}. We now show that this system admits a unique solution $\hat{\bs u}$. Moreover we also show that if $\epsilon>0$ then $\bs{\hat u}\in\Delta^{\epsilon}_{K,N} \subset \Delta_{K}$ for all $N>0$, with $$\Delta^{\epsilon}_{K,N}=\left\{\bs u| \bs u\in \Delta_K, u_i\geq (1+\epsilon N){\eta^{-1}_\epsilon(N)}, \forall i\right\}$$ Mapping back $\bs{\hat u}$ to $\mathbb{R}^{K-1}$ using the logistic transformation \eqref{eq:logitu}, this implies that the stationary point of $h_N(\bs x)$ is in the interior of the integration domain, {\em i.e.}, $\hat{\bs x}$ is finite in $\mathbb{R}^{K-1}$.

\fix{Since the Jacobian of \eqref{eq:logitu} equals $\prod_{i=1}^{K}u_i$, the integrand of \eqref{eq:Iexp} may be written in the original integration variables as $e^{-Nh_N(\bs x)}=e^{N\psi_N(\bs u)}$, with
\begin{equation}
\label{eq:psiN}
\psi_N(\bs u)=\left(\epsilon+\frac{1}{N}\right)\sum_{i=1}^K \log u_i+\sum_{r=1}^R \beta_r \log\left(\sum_{k=1}^K \theta_{kr}u_k\right)
\end{equation}
that is, $-Nh_{N}(\bs x)=N\psi_N(\bs u(\bs x))$ identically on $\mathbb{R}^{K-1}$. It is possible to verify that $\psi_N$ is strictly concave over $\Delta_K$ and that it diverges to $-\infty$ on the boundary of $\Delta_K$, thus it has its maximum at a unique point of the interior of $\Delta_K$, which is also its unique stationary point. Since \eqref{eq:logitu} places $\mathbb{R}^{K-1}$ in one-to-one correspondence with the relative interior of $\Delta_K$, it follows that $h_N$ has a unique stationary point $\hat{\bs x}$, which is finite in $\mathbb{R}^{K-1}$ and is the global minimum of $h_N$. It is further possible to show that the solutions of \eqref{eq:nonlinuv} in $\Delta_K$ coincide with the stationary points of $\psi_N$, so that $\hat{\bs u}=\bs u(\hat{\bs x})$ is the unique solution of \eqref{eq:nonlinuv}. Since each equation in \eqref{eq:nonlinuv} implies $u_i\geq (1+\epsilon N){\eta^{-1}_\epsilon(N)}$, and the same bound can be shown to extend to the $K$th coordinate, we conclude that $\hat{\bs u}\in\Delta^{\epsilon}_{K,N}$. In practice, $\hat{\bs u}$ may be computed with the fixed-point iteration}
\begin{equation}
\label{eq:nonlinuvINF}
u^{(n+1)}_i={\eta^{-1}_\epsilon(N)}\left(1+\epsilon N+\sum_{r=1}^R {{\xi}_r(\bs u^{(n)}) {\theta}_{ir}u^{(n)}_i}\right)
\end{equation}
for $i=1,\ldots,K-1$\fix{, $n\in \mathbb{N}$}.

\subsubsection{Condition 3: positive \fix{definite} Hessian}
\label{sec:proof3}
Let $\bs A=[A_{ij}]$, $i,j=1,\ldots,K-1$, be the Hessian of $Nh_N(\bs x)$ evaluated at the stationary point $\hat{\bs x}$. Computing $\bs A$ by the definition we obtain \eqref{eq:A}.
\fix{We now prove that $\bs A$ is positive definite for all $\epsilon>0$ and all $N>0$. Differentiating twice the identity $Nh_N(\bs x)=-N\psi_N(\bs u(\bs x))$, established in \eqref{eq:psiN}, and evaluating at $\hat{\bs x}$, it is possible to verify that the terms involving the second derivatives of the transformation \eqref{eq:logitu} cancel out, since $\sum_{k=1}^K u_k(\bs x)=1$ and the gradient of $\psi_N$ at a stationary point is orthogonal to $\Delta_K$. This leaves $\bs A=\bs J^T (-N\nabla^2\psi_N(\hat{\bs u}))\, \bs J$, in which $\bs J$ is the Jacobian matrix of \eqref{eq:logitu} at $\hat{\bs x}$, which has full rank. It is also possible to verify that $\nabla^2\psi_N(\hat{\bs u})$ is negative definite for all $\epsilon>0$ and $N>0$. Hence $\bs A$ is positive definite, which verifies Condition 3 and implies in particular $\det(\bs A)>0$.}

\subsubsection{Final expression}
{
\fix{For fixed $\epsilon>0$, the bound $\hat{u}_i\geq(1+\epsilon N)\eta_\epsilon(N)^{-1}$ keeps the stationary points in a compact subset of the interior of $\Delta_K$. Let $\psi_\infty$ be obtained from \eqref{eq:psiN} by replacing $\epsilon+N^{-1}$ with $\epsilon$. Then, growing $N$, we have $\psi_N\to\psi_\infty$ uniformly on compact subsets of the interior, while the term $\epsilon\sum_i\log u_i$ makes $\psi_N$ uniformly small near the boundary. Since $\psi_\infty$ is strictly concave with a unique maximizer, outside any fixed neighbourhood of $\hat{\bs x}$ the exponent is therefore separated from its maximum by a positive amount, uniformly for all sufficiently large $N$. Thus the integral concentrates at the unique non-degenerate stationary point.}
To conclude the proof of Theorem \ref{thm:laplace} we can apply Laplace's method in $\mathbb{R}^{K-1}$ to obtain the expansion
\[
I_{\epsilon}(N)=\sqrt{\frac{(2\pi)^{K-1}}{\det(\bs A)}} e^{-N h_N(\hat{\bs x})}\fix{\bigl(1+\mathcal{O}(N^{-1})\bigr)}
\]
The final expression \eqref{eq:laplace} follows by the expression of $h_N(\hat{\bs x})$ and the inverse transformation \eqref{eq:invlog}. By the initial definitions and using the $\alpha_r$ terms to remove the condition $\theta_{kr}\leq 1$, the asymptotic expansion is finally given by
\begin{equation}
G^{\epsilon}_\t(\bs N)=\frac{\Gamma(\eta_\epsilon(N))}{N_1!\cdots N_R!(\Gamma(1+\epsilon N))^K}\,{\sqrt{\frac{(2\pi)^{K-1}}{\det(\bs A)}}\prod_{r=1}^R \left(\sum_{k=1}^K \theta_{kr} \hat{u}_k\right)^{N_r}\prod_{i=1}^K \hat{u}^{1+\epsilon N}_i}\fix{\bigl(1+\mathcal{O}(N^{-1})\bigr)}
\end{equation}
}

\subsection{Extensions}

\subsubsection{Models with infinite servers}
\label{sec:Z}
When the population at one or more nodes does not scale asymptotically, the asymptotic stationary point of $h_N(\bs x)$ does not lie anymore in the interior of the integration domain and this complicates the asymptotic validity of Laplace's method. In the presence of infinite servers, it does not seem easy to scale parameters in \eqref{eq:Iexp} to address the problem. 
Thus, in models with infinite servers we propose Laplace's method only as a sub-asymptotic heuristic. The sub-asymptotic method amounts to fitting the integrand of \eqref{eq:Iexp} to a multivariate normal density and using the resulting closed-form expressions to approximate $I_{\epsilon}(N)$, for $N<\infty$. This needs to be coupled with an approximation of the indefinite integral in \eqref{eq:GZsimplex}.

The heuristic follows a very similar argument as in the case without infinite servers, thus we just give a sketch. We first apply the change of variable $v=e^{x_0}$ and the logistic transformation \eqref{eq:logitu} so that
\[
I_{\epsilon}(N)=\int_{\mathbb{R}^K} b_N e^{-Nh_N(\bs x)}d\bs x
\]
with $\bs x=(x_1,\ldots,x_{K-1},x_0)$, $b_N=(\Gamma(\eta_\epsilon(N)))^{-1}$, and
\begin{multline*}
-Nh_N(\bs x)=-e^{x_0}+K(1+\epsilon N)x_0+\sum_{r=1}^R N_r \log\left(\sigma_r+e^{x_0}\theta_{Kr}+\sum_{k=1}^{K-1} e^{x_k} (\sigma_r+\theta_{kr}e^{x_0})\right)
\\+(1+\epsilon N)\sum_{k=1}^{K-1}{x_k} - \eta_\epsilon(N)\log \left(1+\sum_{k=1}^{K-1} e^{x_k}\right)
\end{multline*}
Note that this is a $K$-dimensional integral, whereas in the case without infinite servers we have used $K-1$ dimensions. Setting $\nabla h_N(\bs x)=0$ and simplifying terms, we find that the stationary point is written in terms of the original integration variables as the solution $(\hat{\bs u},\hat{v})$ of the system
\begin{equation}
\begin{aligned}
\label{eq:vfpi}
u_i&={\eta^{-1}_\epsilon(N)}\left(1+\epsilon N+\sum_{r=1}^R {\xi_r({\bs u},v) (\sigma_r+v{\theta}_{ir})u_i}\right)\qquad \forall i\neq K\\
  v&=\eta_\epsilon(N)\fix{-\sum_{r=1}^R \xi_{r}(\bs u,v) \sigma_r}
\end{aligned}
\end{equation}
where $\bs u\in \Delta_K$, $\eta_\epsilon(N)=N+K(1+\epsilon N)$, $\xi_{r}(\bs u,v)=N_r (\sigma_r+v\sum_{k=1}^K \theta_{kr}u_k)^{-1}$ and in which the equation for $v$ uses that $\sum_{r=1}^R \xi_{r}(\bs u,v) (\sigma_r+v\sum_{k=1}^K \theta_{kr}u_k)=N$.

Explicit formulas for the entries of $\bs A$ are obtained by computing the Hessian matrix of $Nh_N(\bs x)$ expressed in terms of the variables $(\hat{\bs u},\hat{v})$. Define $\hat{\theta}_{kr}=\sigma_r+\hat{v}\theta_{\fix{kr}}$, $\forall k,r$, the entries of $\bs A=[A_{ij}]$ are given by 
\allowdisplaybreaks[1]
\begin{align*}
A_{ii}=&-\fix{\sum_{j\neq i} A_{ij}}\\
A_{ij}=&\fix{\left(\sum_{r=1}^R \frac{\xi^2_{r}(\hat{\bs u},\hat{v})}{N_r}{\hat{\theta}_{{ir}}\hat{\theta}_{jr}} -\eta_\epsilon(N)\right)}{\hat{u}_i\hat{u}_j}  \qquad i\neq j\\
A_{00}=&\hat{v}\left(1-\sum_{r=1}^R \frac{\xi^2_{r}(\hat{\bs u},\hat{v})}{N_r}\sigma_r\left(\sum_{k=1}^K \theta_{kr}\hat{u}_k\right)\right)\\
A_{i0}=&\,A_{0i}=\hat{v}\hat{u}_i \sum_{r=1}^R \left(\frac{\xi^2_{r}(\hat{\bs u},\hat{v})}{N_r}{\hat{\theta}_{ir}\left(\sum_{k=1}^K \theta_{kr}\hat{u}_k\right)}-{\xi_{r}(\hat{\bs u},\hat{v})} \theta_{ir}\right)
\end{align*}
for all $i,j=1,\ldots,K-1$. The above expressions use the inverse transformations \eqref{eq:invlog} and $x_0=\log v$.
The knowledge of $(\hat{\bs u},\hat{v})$ and $\bs A$ provides a Laplace-type approximation for \eqref{eq:GZsimplex}
\begin{equation}
\label{eq:laplaceZ}
I_{\epsilon}(N) \approx \frac{e^{-\hat{v}}\hat{v}^{K(1+\epsilon N)}}{\Gamma(\eta_\epsilon(N))} \sqrt{\frac{(2\pi)^{K}}{\det(\bs A)}} \prod_{r=1}^R \left(\sigma_r + \hat{v}\sum_{k=1}^K \theta_{kr} \hat{u}_k\right)^{N_r}\prod_{k=1}^K \hat{u}^{1+\epsilon N}_k
\end{equation}
where the exponent of $\hat{v}$ includes the contribution of the Jacobian. Equation \eqref{eq:laplaceZ} can be readily applied to approximating models with infinite server nodes, leading to
\begin{equation}
G^{\epsilon}_\t(\bs N)=\frac{e^{-\hat{v}}\hat{v}^{K(1+\epsilon N)}}{N_1!\cdots N_R!(\Gamma(1+\epsilon N))^K}\sqrt{\frac{(2\pi)^{K}}{\det(\bs A)}} \prod_{r=1}^R \left(\sigma_r + \hat{v}\sum_{k=1}^K \theta_{kr} \hat{u}_k\right)^{N_r}\prod_{k=1}^K \hat{u}^{1+\epsilon N}_k
\end{equation}
As before, the $\epsilon>0$ parameter should be chosen as small as possible, but such that $\det(\bs A)>0$.

\subsubsection{Combining Laplace's method with AMVA}
In some inference problems, measurements for both system state and mean performance metrics are available. In this case, in addition to likelihood-based inference, one may require that the estimated model also matches the empirical mean value of some performance metrics. This typically requires to run an AMVA algorithm alongside the likelihood maximization algorithm. 

In this section we argue that a more efficient way to optimize these models is to heuristically compute ${\hat{\bs u}}$ using the results of AMVA. This effectively doubles the speed of the optimization, since the same AMVA fixed-point iteration can be used both to calculate likelihood and mean measures. A limitation of this method is that no formal guarantee is in place to ensure that $\det(\bs A)>0$ on all instances. However, no problematic instance in this sense is observed throughout the numerical validation. In cases where $\det(\bs A)\leq 0$, one may try to heuristically increase $\epsilon$ in order to resolve this issue.

The proposed method works as follows. Let us observe that, as $N\to\infty$, \eqref{eq:nonlinuv} tends to
\begin{equation}
{
  \begin{aligned}
u_i&=\frac{\epsilon}{1+K\epsilon}+\sum_{r=1}^R \hat{\xi}_r(\hat{\bs u}) {\theta}_{ir}u_i \quad i=1,\ldots,K
  \end{aligned}
  }
\end{equation}
where $\hat{\xi}_r({\bs u})={\beta_r(1+K\epsilon)^{-1}}(\sum_{k=1}^{K} {\theta}_{kr}u_k)^{-1}$.
As $\epsilon\to 0$ the last expression coincides with the expression of the mean-value analysis algorithm's queue-length equations when ${u}_i$ is the total queue-length at node $i$ divided by $N$, and $N\to \infty$. This suggests the following way to determine the point ${\hat{\bs u}}$ at which we instantiate the Laplace's method. Instead of using \eqref{eq:nonlinuv}, we choose $\hat{\bs u}\approx \hat{\bs q}=(\hat{q}_{1},\ldots,\hat{q}_M)$, in which $\hat{q}_{i}=\sum_r q_{ir}(\bs N)/N$, $\forall i$, where  $q_{ir}(\bs N)$ is the mean queue-length of class $r$ at node $i$ in a model with population $\bs N$, a value which can be accurately approximated in $\mathcal{O}(1)$ time and space using AMVA~\cite{Sch79}.


\subsubsection{Monte Carlo integration}
\label{sec:mci}
The applicability of Laplace's method indicates that the integral $I_{\epsilon}(N)$ may be approximated using a multivariate normal distribution centered at the stationary point of $h_N(\bs x)$ and with covariance matrix $\bs A^{-1}$. The resulting normal distribution is non-degenerate if $\bs A$ is positive definite. In situations where asymptotic expansions are expensive or inaccurate, one may thus apply an importance sampling method to $I_{\epsilon}(N)$, using samples from a multivariate normal distribution. Denote by $\bs x_j$ the $j$th sample drawn, out of a total of $J$. We have the importance sampling estimator
\begin{equation}
\label{eq:lse}
I_{\epsilon}(N) \approx J^{-1}\sum_{j=1}^J \frac{b_N e^{-Nh_N(\bs x_j)}}{\phi(\bs x_j)} \qquad \bs x_j \sim \mathcal{N}(\hat{\bs x},\bs A^{-1})
\end{equation}
where $\hat{\bs x}$ is the stationary point of $h_N(\bs x)$ and $\phi(\cdot)$ stands for the normal density function. For the case without infinite servers, $b_N=1$ and $h_N(\bs x)$ is defined as in Section~\ref{sec:proof}. For models with infinite servers, one needs to use the expressions of $b_N$ and $h_N(\bs x)$ given in Section~\ref{sec:Z}. From Monte Carlo integration theory, \eqref{eq:lse} converges to $I_{\epsilon}(N)$ as $O(J^{-1/2})$ under a growing sample size $J$.

\section{Numerical results}
\label{sec:res}

\subsection{Algorithms}
In this section we assess accuracy and speed of the proposed methods. We distinguish the proposed algorithms in two groups:
\squishlist
\item {\em Deterministic methods}, such as the asymptotic expansions, which return the same answer in successive invocations on the same model and therefore are suitable for use within deterministic optimization programs. We include in this group RAY and the asymptotic expansion \eqref{eq:Iexp}, referred to as the {\em logistic expansion} (LE), and the heuristic variant of LE calibrated with AMVA, denoted by LE-A. We also consider in this group the cubature rules given in \eqref{eq:grnmol} with $S=\{1,3,5,7\}$ and denote, {\em e.g.}, by CUB5 a cubature rule with $S=5$. We have also experimented with TE, CA, and MoM, but computational times are far larger than those of the other methods and incompatible with the scale of the experimental validation, which encompasses thousands of optimization programs.
\item {\em Randomized methods}, which use sampling to achieve the desired accuracy in return for an increased effort. We include in this group MCI and the Monte Carlo integration method in \eqref{eq:lse}, which we call {\em logistic sampling} (LS). Monte Carlo methods are instantiated with $J\in\{10^1,10^2,10^3\}$ samples, {\em e.g.}, MCI2 stands for MCI with $J=10^2$, and similarly LS3 has $J=10^3$.
\squishend
For deterministic methods, we are interested in assessing both accuracy and ability to guide optimization-based search. For randomized methods, we verify accuracy as the number of samples grows. Remarks on the implementations are as follows:
\squishlist
\item In LE we use fixed-point iteration to solve \eqref{eq:nonlinuvINF}, setting the convergence tolerance on the 1-norm of $\hat{\bs u}$ to $\tau=10^{-10}$. The initial point has $u_i=1/K$, $\forall i$. We also set $\epsilon=10^{-10}$. Out of the thousands of models solved, none failed to converge and none required to increase $\epsilon$ beyond its initial value.
\item LE-A is implemented using the Bard-Schweitzer AMVA~\cite{Sch79} for determining the point $\hat{\bs q}$ with a $\delta=10^{-6}$ convergence tolerance on the 1-norm of the mean queue-lengths. Also in this case none of the models failed to converge.
\item For cases where $\bs A$ is not positive definite, LS is instantiated by increasing $\epsilon$ in steps of $10^{-3}N$, until obtaining a positive definite matrix. In small and medium-sized models, this calibration is not normally required. However, on large models where the entries of $\bs A$ are small, very few increments of $\epsilon$ are normally sufficient to address the issue. In the random validation on large models, this calibration is required on $44\%$ of the instances and occurs prior to computing \eqref{eq:lse}. The computational cost of the calibration is negligible.
\squishend

\subsection{Methodology}
An important issue for the validation methodology is that for large models $\G$ cannot be computed exactly due to the large cost of the exact algorithms. Thus we first carry out a validation on small and medium-sized models where the normalizing constant can be obtained exactly. Afterwards, we report a similar validation on larger models where we estimate $\G$ by Monte Carlo integration with a large number of samples ($10^7$). In the large-scale setting, we attempt to compensate the variance of the estimator of the normalizing constant by assessing percentage error with respect to the scale, {\em i.e.}, $\log \G$. Note that on most large-scale models the order of the normalizing constant is the dominant factor in the likelihood expressions.

The validation does not include mean performance metrics. This is because their computation can be performed very efficiently using AMVA methods~\cite{Sch79}. Our methods are instead proposed for the accurate computation of likelihoods and probabilities, which require $\G$ and are still difficult to compute in practice.

{ The computational times to run LE and LE-A inside optimization programs are very small, typically a fraction of a second. This is due to the rapid convergence of the fixed-point algorithms used to determine the location of the stationary point. For example, on the largest model with $K=R=64$ and $N=4096$ LE takes $26s$ to find at the first iteration the stationary point, but just $2s$ for a new prediction after a $10^{-6}$ increment of $\bs \theta$, provided that the fixed point equations~\eqref{eq:nonlinuvINF} are re-initialized at the previously-found stationary point. When the model size is decreased to $K=R=8$ the first solution requires just $0.21s$, while successive updates about $0.015s$. Since the optimization-based study considers an identical timeout for all methods, we do not provide details on the running times of individual algorithms. Lastly, we remark that space complexity is negligible and does not grow significantly with the model size. This is because all approximation methods considered throughout have $\mathcal{O}(1)$ space complexity as the population sizes $N$ grow, with $N$ typically being the largest model parameter.}

\subsection{Computing a single normalizing constant}
\subsubsection{Small and medium-sized models}
\label{sec:small}
We consider randomly-generated models with $K\in\{2,4,6\}$ nodes, $R\in\{2,4,6\}$ classes, and where each class has the same number of jobs equal to $N/R=\{2,4,8\}$. Thus the largest model in this group has 6 nodes, 6 classes and 48 jobs. We use less jobs than in the motivating example in Table~\ref{tab:motivating} since we now consider models with $R=6$ classes that are much more expensive to solve exactly. For any given triplet $(K,R,N)$, we solve $100$ random instances, for a total of 2700 models. In each instance, demands are generated at random in $[0,1]$. $\G$ is computed exactly using the CA algorithm.

Note that RAY is the only method that incurs failures during execution. This occurs on 9 models out of 2700 and it is due to a singular determinant in its expression. Indeed, RAY does not provide correctness guarantees in the sub-asymptotic setting~\cite{KneT92}. We count as a failure a run that either stops due to excessive memory requirements, or that returns a 0, {\small \sf NaN}, $\pm \infty$, or a complex value for $\G$ due to numerical issues.

Tables \ref{tab:res-small} and \ref{tab:res-small-rand} give the {\em mean absolute percentage error} (MAPE) on $\G$ for deterministic and randomized methods. The results indicate that the CUB dominates all other methods, including the randomized ones. Execution times of CUB on these models are in the order of a few milliseconds, making this method preferable in small and medium-sized models. As expected, asymptotic expansions incur smaller errors as the number of jobs grows. The Monte Carlo integration methods, MCI and LS, are instead more accurate with fewer jobs. However, increasing the number of samples in both methods quickly lowers errors to the desired level.

In order to better understand the differences between MCI and LS, we have investigated how the accuracy of the two methods varies under increasing number of nodes $K$ or number of classes $R$. MCI decreases errors as $K$ increases, whereas it performs worse under increasing $R$. For example, MCI1 goes from 35.4\% to 48.2\% as the number of classes goes from 2 to 6, whereas it decreases errors from 57.4\% to 30.1\% when the number of nodes grows of the same amount. Conversely, LS is rather insensitive to $R$, with LS1 error lying between 22.5\% and 26.4\%, but the method incurs larger errors as $K$ grows, with LS1 going from 12.5\% with 2 nodes to 38.9\% with 6 nodes. This increased error is due to the larger number of integration dimensions in \eqref{eq:Iexp}, which requires a larger $N$ value to deliver a similar level of accuracy. Similar trends are seen also in large models and suggest that the two methods can complement each other, preferring MCI on models with many nodes and LS on models with several classes.

{\begin{table}[t]
\caption{{Small and medium models - deterministic methods}}
{\begin{center}
\begin{tabular}{lcccc}
\toprule
$K,R\in\{2,4,6\}$ & \multicolumn{3}{c}{{ MAPE (\%)}}\\
\midrule
 $N/R=$ & {\em 2} & {\em 4} & {\em 8} \\
\midrule
 CUB1 & 0.0 & 0.4  & 8.9  \\
 CUB3 & 0.0 & 0.4  & 7.4  \\
 CUB5 & 0.0 & 0.0  & 0.7  \\
 CUB7 & 0.0 & 0.0  & 0.1   \\
 LE & 25.7 & 25.2  & 24.1 \\
 LE-A & 26.7 & 23.3  & 18.3  \\
 RAY & 362.0 & 142.7  & 96.5  \\
\bottomrule
\end{tabular}
\end{center}
}
\label{tab:res-small}
\end{table}
\begin{table}[t]
\caption{Small and medium models - randomized methods}
{\begin{center}
\begin{tabular}{lcccc}
\toprule
$K,R\in\{2,4,6\}$ & \multicolumn{3}{c}{{ MAPE (\%)}}\\
\midrule
 $N/R=$ & {\em 2} & {\em 4} & {\em 8} \\
\midrule
 LS1 & 23.1 & 21.1  & 31.6   \\
 LS2 & 12.1 & 12.9  & 12.1  \\
 LS3 & 6.0 & 7.0  & 7.4   \\
 MCI1 & 28.9 & 41.7  & 55.5   \\
 MCI2 & 9.3 & 13.0  & 18.1   \\
 MCI3 & 3.0 & 4.1 & 5.3   \\
\bottomrule
\end{tabular}
\end{center}
}
\label{tab:res-small-rand}
\end{table}
\begin{table}[t]
\caption{Small and medium models with infinite server nodes - deterministic methods}
{\begin{center}
\begin{tabular}{lccccc}
\toprule
$K,R\in\{2,4,6\}$ & \multicolumn{4}{c}{{ MAPE (\%)}}\\
\midrule
 $\sigma_r$ & $0.1\theta_{max}$ & $\theta_{max}$ & $10\theta_{max}$& $100\theta_{max}$ \\
\midrule
 PAN & {\small \sf N/A} & {\small \sf N/A} & 26.2 & 0.1 \\
 LE & 7.1 & 4.8 & 2.9 & 1.5 \\
\bottomrule
\end{tabular}
\end{center}
}
\label{tab:res-small-rand-Z}
\end{table}
}



\begin{table}[t]
\caption{Large models - deterministic methods}
{
\begin{center}
\begin{tabular}{lcccccc}
\toprule
$K,R\in\{16,32,64\}$ & \multicolumn{6}{c}{{ MAPE (\%)}} \\
\midrule
 $N/R=$ & {\em 2} & {\em 4} & {\em 8} & {\em 16} & {\em 32} & {\em 64} \\
\midrule
CUB1 & 0.0 & 0.1  & 0.4 & 1.1 & 2.5  & 2.9 \\
 LE & 1.8 & 1.0  & 0.7 & 0.4 & 0.4 & 0.5 \\
 LE-A & 1.8 & 1.0 & 0.7 & 0.4 & 0.4 & 0.4  \\
 RAY & 14.5 & 4.8 & 1.4 & 0.5 & 0.3 & 0.4 \\
\bottomrule
\end{tabular}
\end{center}
}
\label{tab:res-large}
\end{table}

\begin{table}[t]
\caption{Large models - randomized methods}
{
\begin{center}
\begin{tabular}{lcccccc}
\toprule
$K,R\in\{16,32,64\}$  & \multicolumn{6}{c}{{ MAPE (\%)}} \\
\midrule
 $N/R=$ & {\em 2} & {\em 4} & {\em 8} & {\em 16} & {\em 32} & {\em 64} \\
\midrule
 LS1 & 0.9 & 0.6  & 1.4 & 2.4 & 2.3 & 1.8  \\
 LS2 & 0.5 & 0.5  & 1.3 & 2.2 & 2.2 & 1.8   \\
 LS3 & 0.3 & 0.4  & 1.2 & 2.1 & 2.1 & 1.7  \\
 MCI1 & 0.1 & 0.1  &0.2 &0.3 & 1.5 &6.1  \\
 MCI2 & 0.0 & 0.0  &0.1 &0.1 & 0.4 &3.3  \\
 MCI3 & 0.0 & 0.0 & 0.0 &0.0 & 0.2 &1.6  \\
\bottomrule
\end{tabular}
\end{center}
}
\label{tab:res-large-rand}
\end{table}

\subsubsection{Large-scale models}
\label{sec:large}
We now consider a similar setup as in the previous experiment, but with $K$, $R$ ranging in $\{16,32,64\}$. The number of jobs ranges in $N/R=\{2,4,8,16,32,64\}$. Thus the largest models have $64$ nodes, $64$ classes and $4096$ jobs. As explained before, in this setting it is difficult to obtain the exact value of the normalizing constant, thus we compare against MCI with $10^7$ samples and focus on matching $\log \G$. The MAPE on $\log \G$ may be seen as the percentage error introduced in log-likelihoods such as \eqref{eq:llog}.

Tables \ref{tab:res-large} and \ref{tab:res-large-rand} present the results of these experiments. Since CUB3, CUB5 and CUB7 fail on over 98\% of the large instances due to floating-point range exceptions, the corresponding entries are omitted from the table. The results indicate that CUB1 remains the best method under small population sizes, however as $N$ grows the asymptotic expansions become the most accurate. The RAY method is the least accurate in light load, but has a similar accuracy to the other methods in high-load. This is indeed the regime that matches the assumptions for the scaling used in RAY \cite{KneT92}; we have noted however that on models with a large number of classes, but a few queues, RAY is less accurate than LE and LE-A, which is consistent with the fact that the scaling used in \cite{KneT92} assumes a growing number of nodes. For example, going from $K=16$ to $K=64$ nodes RAY improves its error from an average of $6.82\%$ to $1.28\%$. On the opposite, when the number of classes is increased from $R=16$ to $R=64$, RAY goes from an average error of $0.99\%$ to $7.22\%$. In the same ranges for nodes and classes, LE and LE-A have narrow error bands between $0.46\%$ and $1.23\%$ average error.

\subsubsection{Models with infinite server nodes}
We have repeated the experiments in Section~\ref{sec:small} on models with a infinite server node, focusing on the validation of the heuristic given in Section~\ref{sec:Z}. Since with infinite servers RAY is no longer applicable, we have validated LE against the PANACEA (PAN) asymptotic expansion~\cite{McKM84}. We have set in the experiments an identical think time on all classes equal to $\sigma_r\in\{0.1\theta_{max}$, $\theta_{max}$, $10\theta_{max}$, $100\theta_{max}\}$, where $\theta_{max}=\max_{r}\max_{k=1,\ldots,K}\theta_{kr}$. Results are shown in Table~\ref{tab:res-small-rand-Z}. PAN fails on all models with think time $\sigma_r=0.1\theta_{max}$ and $\sigma_r=\theta_{max}$ due to violation of the normal usage assumption; it instead returns a $26.2\%$ MAPE with $\sigma_r=\theta_{max}$, and an error less than $0.1\%$ with $\sigma_r=10\theta_{max}$. LE returns a valid solution in all cases, with decreasing errors for increasing $\sigma_r$ values. Thus, LE appears generally more robust than PAN, which is preferable only in very lightly loaded models. 

\subsection{Optimization programs}
We now compare the methods against the likelihood maximization problem \eqref{eq:llog} for service demand estimation, focusing on deterministic methods. For the sake of illustration of the limited performance of randomized methods in this setting, we also include MCI3 in the validation. We consider problems with $K,R\in \{2,4,8,16,32\}$ and populations with $N/R\in\{2,20,40\}$ jobs per class. Each experiment is carried out with the same procedure 
 described in Section~\ref{sec:running}, in particular setting a timeout of $T=10$ minutes. We repeat the same experiment 15 times randomizing demands, solving 1125 optimization programs for each method. Upon detecting an invalid normalizing constant the interior point method returns a failure.  We also mark as failed all the runs returning demands that do not satisfy the constraint $\t\geq \bs 0$.

\subsubsection{Metric}
As mentioned in Section~\ref{sec:running}, a critical issue in the analysis of the results is that \eqref{eq:llog} is non-convex, thus the choice of the initial points affects the relative error on the $\t$ estimate, irrespective of the quality of the approximation of $\G$. This is addressed in Section~\ref{sec:running} by comparing results against an exact method, run without timeout. Here we cannot apply the same approach due to the size of the models and also MCI with a large number of samples ($J=10^7$) fails since the variance of the estimator adversely affects the search direction of the interior-point method. Thus, on most models it does not seem possible to determine and compare the methods based on an absolute accuracy metric.

To cope with this problem, we use the same initial point for all the methods and compare them {relatively} to each other. For each model, we rank methods based on the absolute percentage error from the true value of $\t$. In this way, a method that achieves the best possible estimator given the initial point, will be ranked first, irrespective of the magnitude of the error that depends on the initial point and the local optimum found. Methods that return the same demands are assigned the same rank.

\begin{figure*}[t!]
\centering
\subfigure [$N/R\in\{2,20,40\}$] {
 {\resizebox{0.35\linewidth}{!}{\includegraphics[]{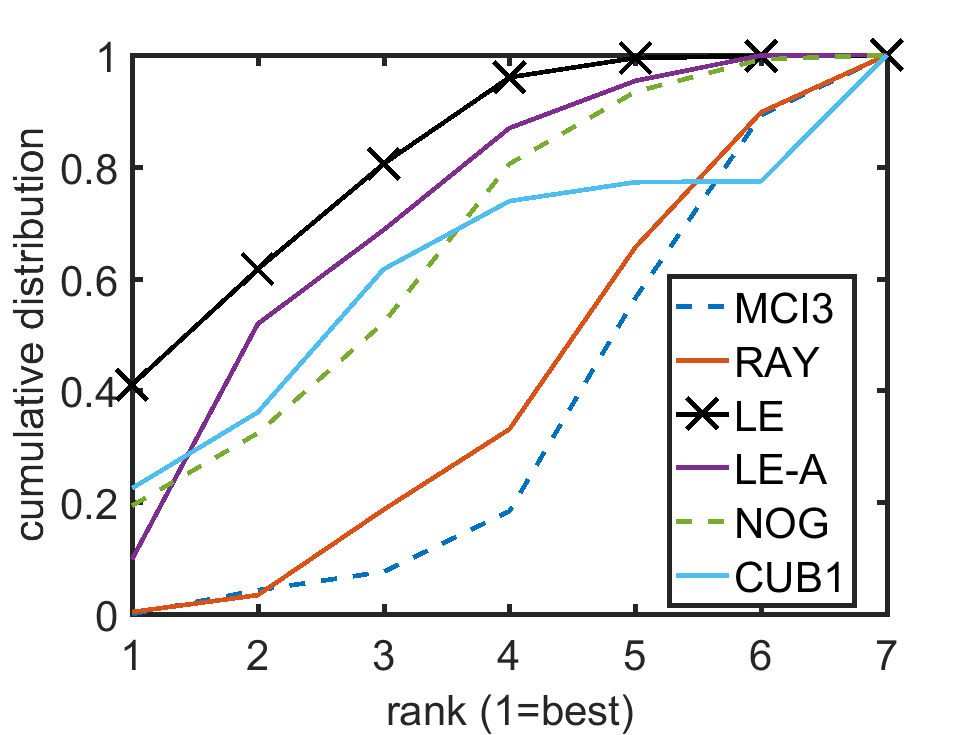}}}
 }
\subfigure [$N/R=2$] {
 {\resizebox{0.35\linewidth}{!}{\includegraphics[]{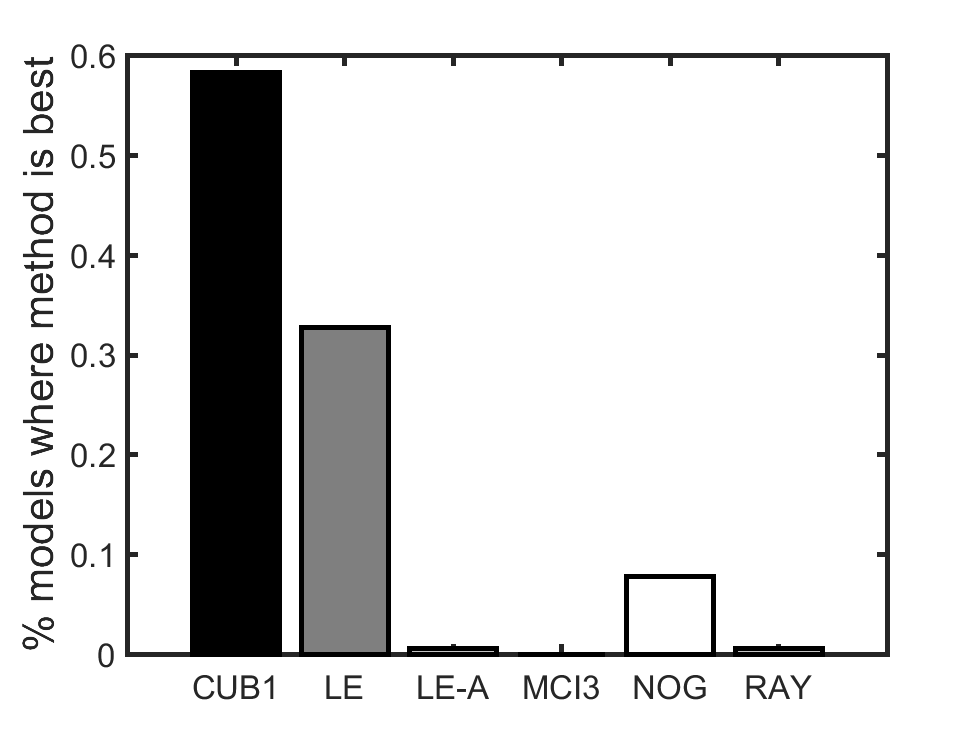}}}
 }\\
\subfigure [$N/R=20$] {
 {\resizebox{0.35\linewidth}{!}{\includegraphics[]{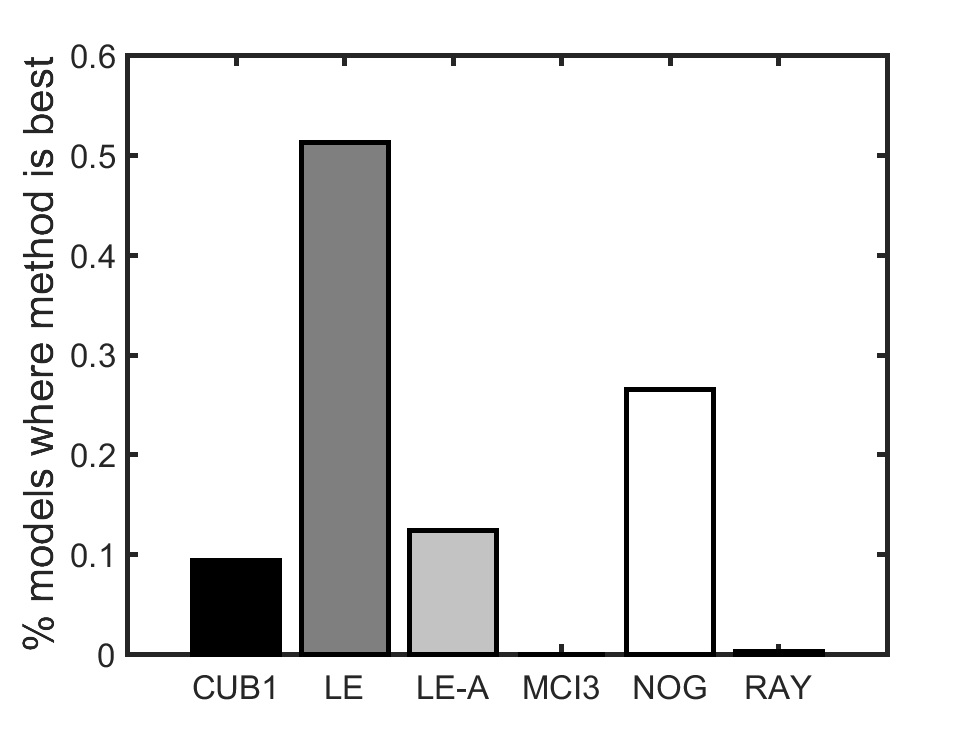}}}
 }
\subfigure [$N/R=40$] {
 {\resizebox{0.35\linewidth}{!}{\includegraphics[]{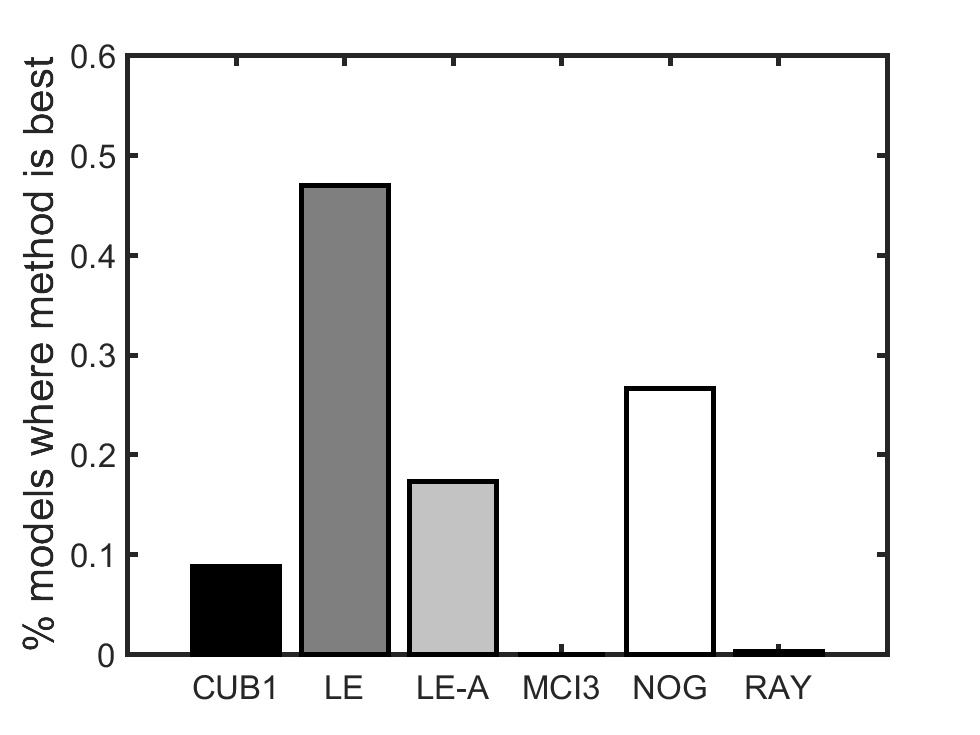}}}
 }
 \caption{Optimization programs: experimental results}
 \label{fig:case-study}
\end{figure*}

\subsubsection{Results}
Experimental results are given in Figure~\ref{fig:case-study}. We include in the study also the NOG method, which neglects the normalizing constant. In Figures \ref{fig:case-study}(b)-(d) we show how frequently each model is ranked best for a given population level $N$. The results indicate that LE outperforms all the other methods and it is slightly better than LE-A. However, in models with a small number of jobs CUB is preferable, which is consistent with the observations in Section~\ref{sec:small}. The fair performance of NOG is explained by the fact that asymptotically the closed network approaches an open network, where the arrival rate intensity matches the cumulative departure rate from the bottleneck nodes. The methods proposed in this paper remain preferable to NOG, as they are the best ones in most models. This is evident in Figure \ref{fig:case-study}(a), which indicates that LE is the best method among the considered ones. The figure shows that about 41\% of the times LE is the best method, and in about 62\% of the cases it ranks second, typically behind CUB1, NOG, or LE-A. Table~\ref{tab:fail} reports statistics on the number of failures, which occur only for CUB as the load grows and for RAY, similarly to what seen for small and medium-sized models. Methods not shown in the table do not incur failures.

\begin{table}
\caption{Percentage of failures (1125 models).}
\begin{center}
{
\begin{tabular}{lcccc}
\toprule
Method & Total & $N/R=2$ & $N/R=20$ & $N/R=40$ \\
\midrule
CUB1 & 22.4 & 0.0 & 26.9 & 40.3\\
RAY  & 8.8 & 2.7 & 16.0 & 7.7 \\
\bottomrule
\end{tabular}
}
\end{center}
\label{tab:fail}
\end{table}

\subsubsection{Single-class models}
\label{sec:single}
{ Lastly, for completeness we include results concerning inference in single-class models, which also arise in practice. For such models we do not study the computation of a single normalizing constant since exact $\mathcal{O}(1)$ expressions are available, {\em e.g.} \eqref{eq:exsingle}. We instead consider likelihood maximization with $K\in\{2,4,8,16,32\}$ nodes, $R=1$ class, $N=\{2,20,40\}$, and single-class demands $\theta_k=k/K$. The initial guess for the demand matrix is $\theta_k=1/K$, $\forall k$. We include in the study the exact-order asymptotic (EOA) formula recently proposed in \cite{GeoXS12}.

Generally speaking, exact methods such as CA are very fast on single-class models, and return a tiny error, on average just $0.058\%$. However, CA complexity is $\mathcal{O}(N)$, thus in models with very large populations $N$ approximations may be of interest. In the above study, LE returns a MAPE of just $0.28\%$. The other methods are instead rather inaccurate, with a MAPE of 28.1\% for CUB1, 26.0\% for RAY, 24.8\% for LE-A, and 66.1\% for EOA. This overall indicates that LE is fit for use also in single-class problems, although exact methods such as CA seem sufficient in practice.}

\subsection{Summary}
Summarizing, the results indicate the following main properties for the proposed algorithms:
\squishlist
\item On small models, CUB dominates all other algorithms.
\item On large models, one should choose CUB if the population is small, or otherwise prefer LE. The same criteria applies to optimization programs involving likelihoods.
\item Among randomized methods, LS is the best on models with many classes, whereas MCI is best with several nodes.
\squishend






\section{Conclusion}
\label{sec:conclusion}
This paper has shown that performance inference over closed systems faces computational hurdles. If the model is a closed product-form multiclass network, we have shown that computational issues can be addressed by novel asymptotic expansions and Monte Carlo sampling methods for the normalizing constant of state probabilities. 
Future research may further investigate the implications of the integral form \eqref{eq:Gsimplex} for the exact theory of normalizing constants, for example on models with multi-server and load-dependent nodes. 

\section*{Acknowledgement}
This research has been partially funded by the European Union's Horizon 2020 research and innovation programme under grant agreement No. 644869 (DICE) and by a UK Engineering and Physical Sciences Research Council grant (EP/L00738X/1). Research data is available at ({\small \url{https://doi.org/10.5281/zenodo.546873}}) under CC-BY 4.0 licence. The author wishes to thank Tony Field for support while preparing this work and Urtzi Ayesta for his helpful comments while serving as managing editor for this paper.

{\footnotesize \bibliographystyle{abbrv}

}
\appendix



\section{Divided differences}
\label{apx:L1}
We show that $g_{\t}(N)=[\theta_1,\mathellipsis,\theta_{K}]x^{N+K-1}$ holds for arbitrary demands $\theta_1\leq \theta_2\leq \cdots \leq \theta_K$. In the case of distinct demands, the result follows by \cite[Sec. 1.31]{Mil33}, \cite{GruM78}. Otherwise note that divided differences may be recursively defined as follows~\cite{DeB04}
\begin{equation}
\label{eq:recdefdd}
 [\theta_1,\ldots,\theta_K]f(x)=
\begin{cases}
\dfrac{f^{(K-1)}(\theta_1)}{(K-1)!} & \mbox{if }\theta_1=\theta_K\\
\dfrac{[\theta_2,\ldots,\theta_K]f(x)-[\theta_1,\ldots,\theta_{K-1}]f(x)}{\theta_K-\theta_1} & \mbox{otherwise}
\end{cases}
\end{equation}
where $f^{(n)}(x)$ denotes the $n$th derivative of $f(x)$. Let us then note the following relation~\cite{Har85}
\begin{align}
\label{eq:harrec}
g_{\t}(N)&=\frac{\theta_K g^{-1}_{\t}(N)-\theta_1 g^{-K}_{\t}(N)}{\theta_K-\theta_1 } \nonumber\\
&=\frac{g^{-1}_{\t}(N+1)-g^{-K}_{\t}(N+1)}{\theta_K-\theta_1 }
\end{align}
where $g^{-i}_{\t}$ refers to a network without node $i$ and we have used that by convolution $g^{-1}_{\t}(N+1) = g^{-1-K}_{\t}(N+1) + \theta_K g^{-1}_{\t}(N)$ and $g^{-K}_{\t}(N+1) = g^{-1-K}_{\t}(N+1) + \theta_1 g^{-K}_{\t}(N)$, see {\em e.g.}~\cite{Buz73}.

We can now show that \eqref{eq:recdefdd} and \eqref{eq:harrec} are equivalent if $f(x)=x^{N+K-1}$. In the case $\theta_1\neq \theta_K$, \eqref{eq:recdefdd} readily matches \eqref{eq:harrec}. Since $\theta_1\leq \theta_2\leq \cdots \leq \theta_K$, upon reaching $\theta_1=\theta_K$ in \eqref{eq:harrec} we have a model with balanced demands, in which case we can use the termination condition~\cite[Eq. 19]{Gor90}: $g_{\t}(N)={N+K-1 \choose K-1}\theta^N_1$. The last expression matches $\frac{f^{(K-1)}(\theta_1)}{(K-1)!}$ when $f(x)=x^{N+K-1}$, as in \eqref{eq:recdefdd}.

\section{Hermite-Genocchi}
\label{apx:L2}
We show that if $f(x)=x^{N+K-1}$, then the Hermite-Genocchi theorem \eqref{eq:genocchi} holds under nondistinct demands, {\em i.e.}
\begin{equation*}
  [\theta_1,\mathellipsis,\theta_{K}]x^{N+K-1}=\frac{(N+K-1)!}{N!}\int_{\Delta_{K}} (\theta_1u_1+\mathellipsis+\theta_{K}u_{K})^N d\bs u
\end{equation*}
for arbitrary $\theta_1,\mathellipsis,\theta_{K}$. Using the multinomial theorem~\eqref{eq:multinomial}
\begin{equation*}
  [\theta_1,\mathellipsis,\theta_{K}]x^{N+K-1}={(N+K-1)!}\int_{\Delta_{K}}\sum_{\bs n\geq 0:\atop n=N} \prod_{i=1}^K \frac{\theta_{i}^{n_i}u_{i}^{n_i}}{n_i!} d\bs u
\end{equation*}
Note that the expression can now be simplified by \eqref{eq:dirichlet}, which provides the same identity proved in Appendix \ref{apx:L1}
\begin{equation*}
  [\theta_1,\mathellipsis,\theta_{K}]x^{N+K-1}=\sum_{\bs n\geq 0:\atop n=N} \prod_{i=1}^K \theta_{i}^{n_i}=g_{\t}(N)
\end{equation*}
where we have noted that the summation in the last expression is the specialization of \eqref{eq:bcmp} to single-class models.

\end{document}